\newtheorem{proposition}{Proposition}
\newtheorem{lemma}{Lemma}
\newtheorem{corollary}{Corollary}
\newtheorem{theorem}{Theorem}
\newtheorem{claim}{Claim}
\newcommand{\tlog}{\log}
\newcommand{\cG}{\mathcal G}
\newcommand{\cI}{\mathcal I}
\newcommand{\cS}{\mathcal S}
\begin{document}

\begin{titlepage}

\title{Wireless Aggregation at Nearly Constant Rate}

\author{
  Magn\'us M. Halld\'orsson
  \qquad
  Tigran Tonoyan \\ \\
  ICE-TCS, School of Computer Science \\
  Reykjavik University \\
  \url{mmh@ru.is, ttonoyan@gmail.com}
}

\maketitle
\thispagestyle{empty}

\begin{abstract}
  One of the most fundamental tasks in sensor networks is the computation of a (compressible) aggregation function of
  the input measurements. What rate of computation can be maintained, by properly choosing the aggregation tree, the
  TDMA schedule of the tree edges, and the transmission powers?  This can be viewed as the convergecast capacity of a
  wireless network.

  We show here that the optimal rate is effectively a constant.  This holds even in \emph{arbitrary} networks, under the
  physical model of interference.  This compares with previous bounds that are logarithmic (e.g., $\Omega(1/\log n)$).
  Namely, we show that a rate of $\Omega(1/\log^* \Delta)$ is possible, where $\Delta$ is the length diversity (ratio
  between the furthest to the shortest distance between nodes). 
  It also implies that the \emph{scheduling complexity} of wireless connectivity is $O(\log^* \Delta)$.
  This is achieved using the natural minimum spanning tree (MST). 
  Our method crucially depends on choosing the appropriate power assignment for the instance at hand,
  since without power control, only a trivial linear rate can be guaranteed.
  We also show that there is a fixed power assignment that allows for a rate of $\Omega(1/\log\log \Delta)$.

  Surprisingly, these bounds are best possible. No aggregation network can guarantee a rate better than $O(1/\log\log \Delta)$ using fixed power assignment. Also, when using arbitrary power control, there are instances whose 
MSTs  cannot be scheduled in fewer than $\Omega(1/\log^* \Delta)$ slots.
\end{abstract}

\end{titlepage}

\begin{figure*}[b!]
\centering
\includegraphics[width=\textwidth]{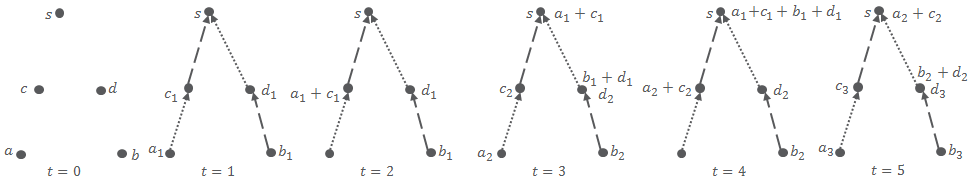}
\caption{Aggregation network consisting of five nodes, a spanning tree and a periodic schedule $S_1,S_2,S_1,S_2,\dots$, where $S_1$ is the pair of finely dashed links and $S_2$ is the pair of coarsely dashed links.}
\label{fig:aggregexample}
\end{figure*}

\section{Introduction}

Data collection is the primary task of most wireless sensor networks,
with the collected data commonly aggregated and compressed on its way to the sink. 
This may involve measurements taken by all the nodes at continuous rate, where we are interested in an aggregate property of the measurement, such as ``what is the maximum value?''
We are concerned with the fundamental question of the aggregation capacity of sensor networks: how fast can the information harvested from sensors be aggregated at the sink?

The aggregation problem has three components: choosing the \emph{aggregation tree} (or other connected graph spanning the set of nodes), selecting \emph{power assignment} of the sensor nodes and \emph{scheduling}  the communication links in the tree. By selecting a tree and a power assignment, we essentially define the space of  \emph{feasible} subsets of links in the tree, namely the subsets of the links that can be scheduled in a single time slot, without interfering with each other. Then, an \emph{aggregation schedule} is simply an infinite (or long enough, or periodic) sequence of feasible, each specifying the links that transmit in the corresponding time slot. 

Consider the network portrayed in Fig.~\ref{fig:aggregexample}. 
The sensor nodes $a,b,c,d$, shown on the left, are arranged into a tree as shown.
The links of the tree are shown with arrows, and they are assumed to interfere only when they share an endpoint.
We omit here the power assignment.
The schedule used is a periodic sequence $S_1,S_2,S_1,S_2,\dots$,
where the feasible set $S_1$ consists of the pair of finer dashed links and $S_2$ the coarser ones.
Measurements are generated at each node in odd-numbered time slots, e.g.\ $a_1, a_2, a_3$ at node $a$ in time slots $1,3,5$,
which are then forwarded and aggregated up the tree.
Each (measurement) \emph{frame} $i$ consists of the readings $(a_i,b_i,c_i,d_i)$.
We seek a sum aggregation of the frames at the sink, i.e., the sink should learn the values $a_i+b_i+c_i+d_i$, for each $i=1,2,\ldots$.

Following Fig.~\ref{fig:aggregexample} in more detail, we see that 
at the second time slot, reading $a_1$ has arrived at node $c$ where it is combined with the reading $c_1$ as the aggregated value $a_1 + c_1$. At time 3, $a_1+c_1$ has been forwarded to the sink while $b_1$ has been forwarded to $d$.
Furthermore, the second frame has arrived at  the sensor nodes. That means that node $d$ has two values in its buffer: $b_1 + d_1$ as well as the new reading $d_2$. Ultimately, the first frame will be aggregated at the root by start of timeslot 4, for a \emph{latency} of 3. Since the number of frames handled is half the number of time slots, this schedule attains a throughput \emph{rate} of $1/2$. It should be clear that a higher rate cannot be sustained, as it would lead to buffers overflowing.

In general, an aggregation schedule achieves a \emph{rate} (or \emph{scheduling complexity}) $\rho$ if each link is scheduled, on average, in every $1/\rho$-th time slot.
The \emph{aggregation capacity} of a given network (with a fixed aggregation tree and fixed or controllable power assignment) 
is the maximum aggregation rate achievable by any schedule, and the aggregation capacity of a pointset is the 
the maximum capacity achievable by selecting an appropriate tree, power assignment and schedule.

Much in the spirit of Gupta and Kumar~\cite{guptakumar}, the aggregation problem is usually addressed by deriving capacity scaling laws, where the aggregation rate is expressed in terms of the size of the network. Typically, scaling laws have been obtained for \emph{uniformly random} network deployments. Prior to the present work,  only logarithmic bounds have been  known for aggregation capacity in such networks (without using special techniques or properties, such as block coding). For instance, it is known that for uniformly distributed networks, the optimal rate is essentially $\Theta(1/\log n)$ when considering the protocol model of interference or when no power control is used, where $n$ is the number of nodes in the wireless network (see the Related Work for details).

A more recent thread of theoretical research, however, derives scaling laws for \emph{arbitrary network topologies}. It turns out that using appropriate power control and choosing an appropriate aggregation tree, one can obtain worst-case upper bounds that are comparable to the bounds obtained for uniformly distributed networks.

\paragraph*{Our Contribution} 
We show that near-constant aggregation rate is achievable on \emph{every} network instance in the physical model. 
More specifically, that rate is $\Omega\left(1/\log^*\Delta\right)$,
where $\log^*$ is extremely slow-growing function.
Here, $\Delta$ is the length diversity, or the ratio between the furthest to the shortest distance between nodes,
which is in all reasonable situation at most polynomial in $n$, the number of nodes.

The simplicity of the tree construction is an important feature: it is the \emph{minimum spanning tree} (MST) of the point set (directed arbitrarily). 
This is beneficial since the MST uses after all the shortest links available, implying energy efficiency and robustness.
Another important aspect is that our schedules are compactly represented as colorings, or partitions of the links.

In order to obtain the bound above, (near) optimal power control should be used, in which the power level of a node may depend on the power settings of other nodes. In other words, global power control algorithm should be used. As an alternative, we also consider special, \emph{oblivious} power assignments that depend only on the \emph{local} information -- the length of the communication link 
and show that aggregation rate of $\Omega\left(1/\log\log\Delta\right)$ can be achieved with such oblivious power.

Our worst-case bounds even improve on known average-case bounds. 
In networks where the nodes are uniformly distributed in a region of the plane (and, in fact, under most stochastic distributions), the parameter $\Delta$ is polynomial in $n$, with high probability. Thus, while best previous aggregation rate in this setting was $\Omega(1/\log n)$ \cite{GiridharK05}, 
we obtain bounds $\Omega\left(1/\log^* n\right)$ with global power control and $\Omega\left(1/\log\log n\right)$ with oblivious power schemes.
To our knowledge, this is the first case where analysis of worst-case deployments directly leads to improved results for random networks. 

Perhaps surprisingly, we show that our bounds are best possible. 
Specifically, $\Omega(1/\log\log \Delta)$ is the best (worst-case) rate possible when using any oblivious power assignment.
The bounds for arbitrary power are also best possible for the minimum spanning tree of a pointset:
We show that our analysis is tight in that the MST cannot lead to better aggregation rates:
we construct instances for which any power assignment and any schedule of the MST yields a rate of $O(1/\log^* \Delta)$.

Our results build on the approximation framework of \cite{us:stoc15,us:fsttcs15} that captures the additive SINR interference with appropriate unweighted graphs.
This results in very simple scheduling algorithms: they essentially consist of coloring a conflict graph. This also has algorithmic advantage due to the local decision whether a color is valid or not, which contrasts with the appearance of the physical model as being inherently non-local.
Our results are obtained by combining these and other non-trivial techniques developed in recent years to formulate and analyze algorithms in the physical model \cite{KesselheimSODA11,HMSODA12,GoussevskaiaHW14,HB15}. While this allows for relatively compact presentation,
it should not be confused with easy application of standard techniques.

\paragraph*{Related Work}
Data collection/aggregation, being an important part of wireless sensor networks, have been extensively studied in a variety of different settings. Here we review the closely related literature.

Research on scaling laws of network capacity originates in the work of Gupta and Kumar~\cite{guptakumar}, where the transport capacity of random networks was considered. The early work on scaling laws for aggregation capacity in random networks includes, e.g.~\cite{MarcoDLN03, GiridharK05}. It was shown, essentially, that in the \emph{protocol model}, the aggregation rate of random networks is $\Omega(1/\log n)$ (without using  coding techniques). Similar results for random networks are obtained also using percolation theory, e.g.~\cite{DousseFMMT06,Vaze12}, where it was shown that with a uniform power assignment, there is a coloring of the nodes with $O(\log n)$ colors (which induces a coloring of edges) that gives a connected network, and that $\Omega(\log n)$ colors are necessary. It is also known that in regular grids, constant aggregation rate can be achieved~\cite{AvinLPP12}. 

The problem for worst-case networks was first considered in~\cite{MoscibrodaW06}. It was shown that unlike the random networks, the worst-case aggregation capacity depends crucially on the power control algorithm, and that the aggregation rate  of some networks is $O(1/n)$ in the protocol model, or in the physical model  with no power control. In contrast, with appropriate power control, the aggregation rate can be improved exponentially, namely, to $\Omega(1/\log^4 n)$. This was soon improved to $\Omega(1/\log^2 n)$ \cite{MoscibrodaWZ06, MoscibrodaIPSN07}, and later to $\Omega(1/\log n)$~\cite{HMSODA12}. The latter is a tight bound on the \emph{latency} of the aggregation, while we focus here on the sustained \emph{throughput}.
Aggregation capacity problem for worst-case networks has also been studied in~\cite{IncelGKC12}, but their heuristics are mainly evaluated through experiments.

The work on aggregation capacity has also been extended to various settings where particular features of networks or aggregation functions come into  play. Examples are the works considering the generalized SINR model~\cite{XuLS13,WangJLLT14}, coding techniques (i.e., for special aggregation functions)~\cite{SubramanianGS07,GiridharK05}, MIMO \cite{FuQWL12,ZhengB07}, mobile sensor networks etc.
We refer the reader to~\cite{IncelGK11} for further bibliography on  aggregation/collection problems.

\paragraph*{Roadmap.} We present the main models and assumptions on the network, as well as the formal definitions and problem statement in Sec.~\ref{S:defs}. Sec.~\ref{S:aggregationprotocol} introduces the main framework and our aggregation protocols. In Sec.~\ref{S:impossibility}, we demonstrate tightness of our analysis of aggregation protocols, proving upper bounds on the worst-case aggregation capacity of some network instances.  Some technicalities are deferred to the Appendix.

\section{Model and Problem}
\label{S:defs}

\paragraph*{Network and Links} 

We model the wireless sensor nodes as a set $R$ of points, arbitrarily located on the Euclidean plane. The nodes transmit in the same frequency band, working in synchronized time slots, where a time slot is sufficient for communicating a single packet.

A \emph{communication link} (or simply link) represents a communication request from a sender node $s$ to a receiver node $r$. Whenever dealing with a fixed set $L$ of links, we assume they are numbered from $1$ to $n=|L|$ and each link $i$ has sender $s_i$ and receiver $r_i$.

We denote $d_{ij}=d(s_i,r_j)$\label{G:asymdistance} and $l_i=d(s_i,r_i)$\label{G:li}, where $d(\cdot, \cdot)$ denotes the Euclidean distance. $l_i$ is called the \emph{length} of link $i$. 
We let $\Delta(L)$\label{G:delta} denote the ratio between the longest and the shortest link lengths in $L$, and drop $L$ when clear from context.
For sets $S_1, S_2$ of links, we let $d(S_1,S_2)$ denote the minimum distance between \emph{a node} in $S_1$ and a
node in $S_2$. In particular, $d(i,j)$ denotes the minimum distance between the nodes of two links $i,j$.

Given a set $S$ of links, we let $S_i^+=\{j\in S : l_j\ge l_i\}$\label{G:liplus} denote the subset of links that are longer than link $i$,
and similarly $S_i^-=\{j\in S : l_j\le l_i\}$\label{G:liminus} the subset of links shorter than $i$.

\paragraph*{Power Assignments} 
A \emph{power assignment} for a set $L$ of links is a function $P:L\rightarrow \mathbb{R}_+$. For each link $i$, $P(i)$\label{G:power} defines the power level used by the sender node $s_i$. 

We consider two modes of power control: \emph{global power control} and \emph{oblivious power schemes}. 
In the former, the nodes have the possibility to choose their power level arbitrarily, possibly taking into account the power levels of all other nodes. This assumption may be too strong in certain scenarios. In such cases, oblivious power schemes  $P_{\tau}$\label{G:powertau} of the form $P_{\tau}(i)=C\cdot l_i^{\tau\alpha}$ may be used, where $C$ is constant for the given network instance and $\tau\ge 0$. Note that in such power assignments, the power level of each link depends only on a local information - the link length. The simplest of such power schemes are the  \emph{uniform} power scheme ($P_0$) and  \emph{linear} power scheme ($P_1$).

\paragraph*{Feasibility}
Since the nodes transmit in the same frequency band, there is potential interference between parallel transmissions, so not every set of links can transmit at the same time. In order to capture the sets that a single time slot can accommodate, we use the notion of \emph{feasibility} of sets of links, modeled by the \emph{physical model} of communication. In the \emph{physical model} of communication, when using a power assignment $P$, a transmission of a link $i$ is successful if and only if  
\begin{equation}\label{E:sinr}
\cS_i\ge \beta\cdot \left(\sum_{j\in S\setminus \{i\}}{\cI_{ji}} + N\right),
\end{equation}
where $\cS_i$ denotes the received signal of link $i$, $\cI_{ji}$ denotes the interference on link $i$ caused by link $j$,   
$N\ge 0$\label{G:noise} is a constant denoting the ambient noise, $\beta>0$\label{G:beta} is the minimum SINR (Signal to Interference and Noise Ratio) required for a message to be successfully received and $S$ is the set of links transmitting concurrently with link $i$. We model signal attenuation through log-distance path-loss, which implies that  $\cS_i=\frac{P(i)}{l_{i}^\alpha}$ and $\cI_{ji}=\frac{P(j)}{d_{ji}^\alpha}$ for a constant $\alpha>2$.

A set $L$ of links is called $P$-\emph{feasible} if the condition~(\ref{E:sinr}) holds for each link $i\in L$ when using power assignment $P$. We say $L$ is \emph{feasible} if there exists a power assignment $P$ for which $L$ is $P$-feasible. Similarly, a collection of sets is $P$-feasible/feasible if each set in the collection is.

\paragraph*{Interference-limited networks}
We assume that set of links considered (specifically, the minimum spanning tree of the input pointset) is \emph{interference limited}. Namely, for each link $i$,  $P(i) \ge (1+\epsilon)\cdot \beta N l_i^\alpha$, where $\epsilon >0$ is a constant. Note that $P(i) = \beta N l_i^\alpha$ is the minimum power required for communicating over link $i$ in the absence of other transmissions. The assumption above implies that  the noise term can be ignored, i.e.\ setting $N=0$ affects only the constant factors in our results (see e.g. \cite{us:stoc15, us:fsttcs15}).

\paragraph*{Aggregation schedules and rate maximization}
An \emph{aggregation schedule} is a sequence $I_1, I_2, \ldots$ of feasible sets of links
such that the links $\cup_i I_i$ induce an acyclic digraph directed towards a given sink and spanning the pointset $R$.
The \emph{(aggregation) rate} of the schedule is the largest value $\rho$ such that
for all sufficiently large windows, $\exists m_0, \forall m > m_0$, 
each link appears at least $\rho m$ times in the first $m$ feasible sets.
The \emph{rate maximization} problem is to find an aggregation schedule of maximum rate for the given pointset.

Our positive results for rate maximization are obtained with a specific type of aggregation schedules, which are periodic repetitions of a \emph{coloring} of the link set.
Namely, a partition of a linkset $L$ into feasible ($P$-feasible) subsets is called a \emph{coloring (schedule)}. 
The \emph{number of colors/slots} or the \emph{schedule length} will refer to the number of subsets in the schedule,
and the rate then corresponds to the reciprocal of the schedule length.

\section{Aggregation Protocol}\label{S:aggregationprotocol}

Computing aggregation capacity involves three non-trivial cross-layer subtasks:
\begin{itemize}
\item \emph{Tree:} Selecting the edges of a converge-cast tree,
\item \emph{Power:} Choosing the power used by the transmitters,
\item \emph{Schedule:} Scheduling the transmissions using TDMA.
\end{itemize}
We are fortunately able to finesse the first two tasks, allowing us to focus entirely on the third.
Namely, the tree is simply the minimum spanning tree and the power can be a fixed function of the link length.
Even for the case of arbitrary power control where context-sensitive power is needed, we can leverage a formulation of Kesselheim \cite{KesselheimSODA11} that effectively take the power assignment out of the picture.

The scheduling task involves selecting feasible sets to transmit in each time slot, utilizing global synchronization.
In general, this can involve an infinite, non-repeating sequence of feasible sets.
Fortunately, a simpler approach succeeds here: we can find a short coloring schedule of the links of the tree (i.e. a partition into feasible subsets),
and periodically repeat this coloring.
The rate achieved is then inversely proportional to the length of the schedule.
Clearly, the rate achieved with a coloring schedule is a lower bound on the rate of an optimal schedule.

Our approach is to form a graph on the links, run a vertex coloring algorithm on this graph, and use the resulting coloring as a schedule of the links. Namely, the graph $G(T)$ contains a vertex for each link of the tree $T$, and we need to specify when there should be an edge between vertices corresponding to two tree-links.
We want this graph formulation to satisfy three properties:
\begin{enumerate}
 \item \emph{Feasibility}: Every independent set of $G(T)$ corresponds to a feasible set (a subset of $T$). Thus, a coloring of the nodes of $G(T)$ gives a valid coloring schedule of $T$. The length of the schedule is the number of colors used.
 \item \emph{Algorithmic tractability}: There is an efficient algorithm for coloring $G(T)$ with a constant-factor performance guarantee.
 \item \emph{Effectiveness}: The chromatic number of $G(T)$ should be small.
\end{enumerate}
All three properties imply that the length of the schedule will be short.

Graphs satisfying most of these properties were given in \cite{us:stoc15,us:fsttcs15}.
Namely, those graph formulations are such that for a given  set $L$ of links (not necessarily a tree),
there are graphs $G_{arb}(L), G_{obl}(L)$, and $G_1(L)$ such that
\begin{itemize}
 \item Independent sets of $G_{arb}(L)$ are feasible under arbitrary power control,
\item Independent sets of $G_{obl}(L)$ are feasible under an oblivious power assignment $P_\tau(i) \sim l_i^{\tau \alpha}$, where 
$l_i$ is the length of the link and $\tau \in (0,1)$ is a constant, and
\item The coloring problem on these graphs is constant-approximable, and
\item The schedules are short: \\
$\chi(G_{arb}(L)) = O(\log^* \Delta(L)) \cdot \chi(G_1(L))$ and \\
$\chi(G_{obl}(L)) = O(\log\log \Delta(L)) \cdot \chi(G_1(L))$.
\end{itemize}
For the reader's convenience, we give formal definitions of those graphs in the appendix.
All that is missing is the following piece of the puzzle that we supply in this work:
 \begin{quote}
 \emph{When $T$ is the MST of a planar pointset, $\chi(G_1(T)) = O(1)$.}
 \end{quote}
From this, our main results follow.
\begin{theorem}\label{T:schedulemst}
Let $R$ be any set of nodes in the plane and let $S$ be a set of links obtained by arbitrarily orienting the edges of an MST over $R$. Then the set $S$ has a coloring schedule of length  $O(\log^*{\Delta(S)})$ using a global power control algorithm, and a coloring schedule of length $O(\log\log{\Delta(S)})$ using an oblivious power scheme.
\end{theorem}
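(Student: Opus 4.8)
The plan is to derive Theorem~\ref{T:schedulemst} from the single structural fact advertised just before its statement: if $T$ is an arbitrary orientation of the minimum spanning tree of a planar pointset, then $\chi(G_1(T)) = O(1)$. Granting this, the theorem is immediate: the stated relations give $\chi(G_{arb}(T)) = O(\log^*\Delta(T)) \cdot \chi(G_1(T)) = O(\log^*\Delta(T))$ and $\chi(G_{obl}(T)) = O(\log\log\Delta(T)) \cdot \chi(G_1(T)) = O(\log\log\Delta(T))$; and any proper vertex coloring of $G_{arb}(T)$ (resp.\ $G_{obl}(T)$) is, by the feasibility property of these graphs, a valid coloring schedule of $T$ under a global power assignment (resp.\ under the oblivious scheme $P_\tau$), whose length is the number of colors. (Plugging in the constant-factor coloring algorithms makes the schedules constructive, but only the chromatic bounds are needed for the statement as written.) So the whole task reduces to bounding $\chi(G_1(T))$ by a constant.

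I would prove $\chi(G_1(T)) = O(1)$ by showing that $G_1(T)$ is $O(1)$-degenerate and coloring greedily in non-increasing order of link length. Concretely: list the MST links as $e_1, e_2, \dots$ with $l_{e_1} \ge l_{e_2} \ge \cdots$, and bound, for each $e = e_k$, the number of links $e_j$ with $j < k$ adjacent to $e$ in $G_1$ — equivalently, the number of MST links $f$ with $l_f \ge l_e$ and $f \sim_{G_1} e$. Two geometric ingredients do this. First, a \emph{proximity} step: unwinding the definition of $G_1$ from the appendix — where a conflict between two links is declared when a geometric-mean-type relative distance (roughly $\sqrt{d_{ij}\, d_{ji}}/\sqrt{l_i l_j}$) drops below a fixed constant — one sees that a conflict of $e$ with a much longer link $f$ cannot be produced by the far part of $f$ alone: since $f$ cannot lie entirely near $e$, the factor of the relative-distance inequality coming from the part of $f$ far from $e$ is of order $l_f$, which forces the complementary factor — a distance from an endpoint of $f$ to an endpoint of $e$ — to be $O(l_e)$ (the bounded-ratio case $l_e \le l_f = O(l_e)$ is handled by the trivial bound $\sqrt{l_e l_f} = O(l_e)$). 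Consequently every preceding $G_1$-neighbor of $e$ is an MST link of length $\ge l_e$ having an endpoint inside the union of two balls of radius $O(l_e)$ around the endpoints of $e$. Second, a \emph{local-sparsity} lemma for MSTs: for any point $p$ and scale $r$, only $O(1)$ MST links of length $\ge r$ have an endpoint within $O(r)$ of $p$. Combining the two bounds the back-degree of every $e$ by $O(1)$, as required.

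For the local-sparsity lemma I would lean on the classical structure of Euclidean MSTs: (i) every node has MST-degree at most $6$; (ii) the empty-lune property (for an edge $uv$, no node lies within distance $|uv|$ of both $u$ and $v$); and (iii) scale separation — if all MST edges of length $< r$ are contracted, the resulting ``clusters'' are pairwise $\ge r$-separated. Property (iii) follows from the cut property of the MST: for $a, b$ in different clusters, the $a$--$b$ tree path contains an edge of length $\ge r$, and the segment $ab$ crosses that edge's fundamental cut, so $|ab| \ge r$. Hence a region of diameter $O(r)$ meets only $O(1)$ clusters (pack $\ge r$-separated representatives into a bounded ball), and, the MST being a tree, at most one MST edge runs directly between any two clusters; using (i) and (ii) to absorb the bounded multiplicities coming from edges sharing an endpoint or lying near-coincident, the count of length-$\ge r$ edges touching the region is $O(1)$.

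The main obstacle is exactly this local-sparsity lemma together with the upstream proximity claim — in particular, pinning down that a conflict of $e$ with a much longer $f$ really confines $f$ to a radius-$O(l_e)$ window around $e$, and not merely to the naive radius-$O(\sqrt{l_e l_f})$ window, which would only yield an $O(\log\Delta)$ degeneracy bound and hence a schedule longer than claimed by a factor of $\log\Delta$. This relies on the precise geometric-mean-like shape of the conflict relation of $G_1$, and the remaining care goes into the cluster-separation argument and the lune-based multiplicity bookkeeping (including the borderline case of near-parallel, near-coincident, equal-length edges). All constants depend only on the path-loss exponent $\alpha$ and the SINR threshold $\beta$.
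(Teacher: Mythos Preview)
Your high-level plan matches the paper exactly: reduce Theorem~\ref{T:schedulemst} to $\chi(G_1(T))=O(1)$ via the stated relations $\chi(G_{arb})=O(\log^*\Delta)\cdot\chi(G_1)$ and $\chi(G_{obl})=O(\log\log\Delta)\cdot\chi(G_1)$, then bound $\chi(G_1(T))$ by a greedy argument ordered by decreasing link length. That is precisely Theorem~\ref{thm:g1} in the paper, and the paper's proof is also a first-fit in non-increasing length order, with the back-degree controlled by Lemma~\ref{L:mainlemma} (imported from \cite{HMSODA12}). Two concrete issues in your write-up are worth flagging.

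\textbf{Definition of $G_1$.} You unwind $G_1$ as a geometric-mean-type condition $\sqrt{d_{ij}d_{ji}}/\sqrt{l_i l_j}\le\mathrm{const}$; that is not the graph used here. In this paper (Section~\ref{sec:bounding} and Appendix~\ref{S:conflict}) $G_1=\cG_\gamma$ with $f\equiv\gamma$: links $i,j$ are adjacent iff $d(i,j)\le \gamma\cdot\min(l_i,l_j)$. With this definition your ``proximity step'' is immediate---if $f$ is a longer $G_1$-neighbor of $e$ then already $d(e,f)\le \gamma\, l_e$, so some endpoint of $f$ lies in a ball of radius $\gamma\, l_e$ around an endpoint of $e$; no geometric-mean gymnastics or ``far part of $f$'' argument is needed. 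Your worry that a naive reading only gives a radius $O(\sqrt{l_e l_f})$ window (and hence $O(\log\Delta)$ degeneracy) simply does not arise for this $G_1$.

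\textbf{The local-sparsity lemma.} This is exactly what the paper cites as Lemma~\ref{L:mainlemma}: $\sum_{j\in T_i^+}\min\{1,l_i^\alpha/d(i,j)^\alpha\}=O(1)$, which in particular bounds by $O(1)$ the number of $j\in T_i^+$ with $d(i,j)\le \gamma l_i$. The paper does not reprove it; it quotes \cite{HMSODA12}. Your direct sketch via $r$-clusters has a gap: you correctly show that the $r$-clusters are pairwise $\ge r$-separated (cut property), so only $O(1)$ of them meet a ball of radius $O(r)$, and that at most one MST edge runs between any two clusters. But this bounds the number of \emph{pairs of clusters} contributing an edge, not the number of long edges with an endpoint in the ball. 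A single cluster $C$ meeting the ball can have arbitrarily many points inside it (a chain of $\epsilon$-spaced vertices), and the degree of $C$ in the cluster tree is not bounded by $6$---the degree bound (i) applies to vertices, not to contracted clusters. The ``absorb bounded multiplicities via (i) and (ii)'' step therefore does not close the argument as written; one needs the finer accounting that \cite{HMSODA12} carries out with the additive operator $I$ (which is what the paper's proof of Theorem~\ref{thm:g1} invokes). If you want a self-contained proof, you should expect to reproduce that analysis rather than rely on cluster counting plus degree/lune alone.
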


The scheduling algorithm is a simple and classic greedy algorithm for coloring the graph ($G_{arb}$ or $G_{obl})$):
Process the nodes/links in non-decreasing order of link-length and assign a node the smallest color that was not used to color its neighbors that preceded it.

The theorem implies
improved aggregation capacity bounds for networks distributed uniformly at random in a region of the plane, say a square of side $a>0$. Let $R$ be such a set. It follows readily from the results of e.g.~\cite[Sec. 4.4]{Moltchanov12}, that with high probability (say, $1-O(n^{-2})$), the minimum distance between any two nodes in $S$  is $a/poly(n)$. This implies that $\Delta = poly(n)$ w.h.p., implying the following result (note that we do not exclude that $a$ may depend on $n$).

\begin{corollary}
For a set of $n$ nodes distributed in a square of side $a>0$ or a disk of radius $a$ uniformly at random, the edges of the MST can be schedules in $O(\log^* n)$ slots using a global power control algorithm and in $O(\log\log{n})$ slots using an oblivious power scheme, with high probability.
\end{corollary}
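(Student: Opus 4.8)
The plan is to obtain the corollary as a direct consequence of Theorem~\ref{T:schedulemst}, the only additional ingredient being that a uniformly random point set has polynomially bounded length diversity with high probability. Let $R$ be the random set of $n$ points and let $S$ be an arbitrary orientation of an MST over $R$. Every edge of the MST is a segment between two points of $R$, hence has length at least the minimum interpoint distance $\delta(R)$ and at most the diameter $D(R)$; consequently $\Delta(S)\le D(R)/\delta(R)$, and it suffices to bound $D(R)$ from above and $\delta(R)$ from below.

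The upper bound on $D(R)$ is deterministic: $D(R)\le a\sqrt2$ in the square of side $a$ and $D(R)\le 2a$ in the disk of radius $a$. For the lower bound on $\delta(R)$ I would use a union bound. For two independent uniform points in a region $\mathcal{A}$, the probability that they lie within distance $\epsilon$ of each other is at most $\pi\epsilon^2/|\mathcal{A}| = O(\epsilon^2/a^2)$, since for a fixed first point the disk of radius $\epsilon$ around it has area at most $\pi\epsilon^2$ while $|\mathcal{A}|=\Theta(a^2)$. Summing this over the $\binom{n}{2}$ pairs and choosing $\epsilon=a/n^2$ yields that with probability $1-O(1/n^2)$ all pairwise distances in $R$ are at least $a/n^2$. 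On this event $\Delta(S)\le D(R)/(a/n^2)=O(n^2)$. Note that the common scale $a$ cancels in the ratio, so the estimate is valid even if $a=a(n)$; equivalently, since $\Delta$ is scale-invariant one may rescale $R$ to the unit region from the start. This reproduces the fact attributed to~\cite[Sec.~4.4]{Moltchanov12}.

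It remains to substitute $\Delta(S)=O(n^2)$ into Theorem~\ref{T:schedulemst}. Using that $\log^*$ and $\log\log$ are nondecreasing, together with the elementary estimates $\log^*(\mathrm{poly}(n))=\log^* n+O(1)$ and $\log\log(\mathrm{poly}(n))=\log\log n+O(1)$, the bounds $O(\log^*\Delta(S))$ and $O(\log\log\Delta(S))$ guaranteed by Theorem~\ref{T:schedulemst} become $O(\log^* n)$ and $O(\log\log n)$ on the high-probability event, which is exactly the claim. There is no real obstacle here beyond being careful not to treat $a$ as a constant (the statement explicitly allows it to grow with $n$); the genuine technical content sits entirely in Theorem~\ref{T:schedulemst}, and ultimately in the bound $\chi(G_1(T))=O(1)$ for the MST $T$, which we may invoke here as a black box.
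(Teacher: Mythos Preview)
Your proposal is correct and follows essentially the same route as the paper: invoke Theorem~\ref{T:schedulemst} and show that $\Delta(S)=\mathrm{poly}(n)$ with high probability via a bound on the minimum interpoint distance. The paper simply cites \cite[Sec.~4.4]{Moltchanov12} for the minimum-distance bound whereas you supply the explicit union-bound calculation, but the structure and content of the argument are the same.
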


More generally, the same property should hold for any non-heavy tailed distribution.

We give an overview of the graphs $G_{arb}(L)$ and $G_{obl}(L)$, and the arguments of \cite{us:stoc15,us:fsttcs15} in Appendix A,
but focus here on the graph $G_1$. But before going into detailed arguments and more technical results, we discuss several modeling and other  issues concerning the results above.

\subsection{Relevant Issues}


\paragraph*{Rate vs.\ latency} High rate and low latency are two desirable objectives that do not always go together.
Chains of unit-length links (or the regular grid) can be scheduled in constant number of slots, implying a constant rate,
but that causes the latency to be linear. Alternatively, by forming an appropriate tree, a latency of $O(\log n)$ can be attained \cite{HMSODA12}, but with a rate of $\Theta(1/\log n)$. The latency is, however, never better than $\log n$ (since in each round, at most half the nodes can forward their measurement to another node).
The focus in this paper is only on the optimal rate, but it is plausible that it can also lead to a bicriteria optimization with modest tradeoffs.

\paragraph*{Power limitations}
Our bounds apply also in the case when the nodes are power constrained. In this case, not all pairs of nodes can communicate (even without concurrent transmissions), but only the ones that are sufficiently close. This corresponds to a \emph{reduced} graph over the set of nodes, instead of the complete graph (not to be confused with conflict graphs). Now, it is sufficient to require that the maximum available power of the nodes is sufficient to communicate over the longest link of an MST of the reduced graph, i.e.\ that $P(i) \ge (1+\epsilon)\beta N l_i^\alpha$ still holds for all links of the MST. 
We refer to \cite{KesselheimESA12} and \cite{us:stoc15} for more details.
The latter assumption, which corresponds to the assumption of interference-limited networks, is a necessary one there are noise-limited networks for which only the trivial $1/n$-rate is possible.
Namely, if the nodes are barely reachable due to the noise, no spatial reuse may be possible.

\paragraph*{Multi-hop settings.} We assumed that all nodes are mutually reachable when using sufficient power.
This single-hop setting captures the intrinsic difficulty of organizing and scheduling communication in the physical model to overcome interference. It can fairly easily be extended to multi-hop settings by standard techniques:  selecting local leaders and performing flooding on the graph connecting those leaders. All the links used will be of roughly equal length, leading to analysis similar to that in the protocol model. The throughput of the flooding protocol is actually constant, and thus does not affect the performance of the combined procedure materially; see, e.g., \cite{BHM13P} for details.

\paragraph*{Other aggregation functions}
While we assumed that the aggregation function was fully compressible, our results can also aid the computation of other functions. For instance, in order to determine the \emph{median}, the typical approach is to apply binary search to count how many values fall above or under a given threshold. By applying our algorithm to each such counting aggregation, the combined complexity is correspondingly reduced.

\paragraph*{Pathloss assumptions} The assumptions of planarity of the pointset can be relaxed to more general doubling metric spaces, which may, e.g., be caused by shadowing effects. As shown in \cite{us:stoc15}, some such metric assumptions are necessary to improve on the at-least-logarithmic rate of uniform power. 

\paragraph*{Robustness and temporal variability}
Sporadic random fluctuations in noise or signal conditions do not have a significant effect, assuming an acknowledgment mechanism is included. The impact of Rayleigh fading, when the fading is independent across time (but not necessarily space), has also been shown to be minor \cite{dams2015}. A more extensive or long-term changes may naturally require repairing or reconstructing the tree and the schedule. 

\paragraph*{Approximations in terms of $n$}
In general, our results do not imply new bounds in terms of the number of nodes $n$ as theoretically, $\Delta$ and $n$ are independent parameters. It is, however, natural and common to assume that $\Delta$ is of at most polynomial growth in $n$, i.e., $\Delta \le n^{O(1)}$. Our results then imply rates of $\Omega(1/\log\log n)$ and $\Omega(1/\log^* n)$, even under the still weaker assumption of quasipolynomial growth, $\Delta \le n^{(\log n)^{O(1)}}$.

\subsection{Proof of the Key Theorem}
\label{sec:bounding}

We prove in this section the main claim about the chromatic number of the graph $G_1$.

The graph $G_1(L)$ is defined as follows.
Recall that $l_i$ denotes the length of link $i$ in $L$,
and $d(i,j)$ denotes the distance between the  closest points on the two links $i,j \in L$.
The links in $L$ are the vertices of $G_1$, and two links $i, j$ are adjacent in $G_1$ if $d(i,j) \le \min(l_i, l_j)$.

We will crucially use the following lemma from \cite{HMSODA12} which states a ``sparsity'' property of the MST. For links $i,j$, let us define the additive operator $I$ by:
\[
I(j,i)=\min\left\{1, \frac{l_j^\alpha}{d(i,j)^\alpha}\right\}\ .
\]
Let $I(S,j) = \sum_{i\in S} I(i,j)$ and $I(i,S) = \sum_{j \in S} I(i,j)$, for a set $S$ of links,
and recall that $S_i^+$ denotes the subset of links in $S$ that are longer than link $i$.
\begin{lemma}\cite[Lemma 4.2]{HMSODA12}\label{L:mainlemma}
Let $S$ be a set of links as in Theorem~\ref{T:schedulemst}. Then for any link $i\in S$, $I(i,S_i^+)=O(1)$.
\end{lemma}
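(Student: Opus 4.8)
The plan is to establish the bound by a dyadic decomposition of $S_i^+$ according to distance from link $i$, reducing the whole statement to one packing fact about the edges of an MST. Write $\ell=l_i$. Partition $S_i^+$ into $A_0=\{\,j\in S_i^+:d(i,j)<\ell\,\}$ and, for $k\ge 1$, $A_k=\{\,j\in S_i^+:2^{k-1}\ell\le d(i,j)<2^k\ell\,\}$; the self-term $j=i$ and every $j$ sharing a node with $i$ have $d(i,j)=0$ and so sit inside $A_0$. For $j\in A_k$ with $k\ge 1$ one has $\min\{1,\ell^\alpha/d(i,j)^\alpha\}\le 2^{(1-k)\alpha}$, while the $A_0$-terms are at most $1$, so $I(i,S_i^+)\le |A_0|+\sum_{k\ge 1}|A_k|\,2^{(1-k)\alpha}$. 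Hence it suffices to prove the packing fact that the number of links $j\in S_i^+$ with $d(i,j)<\rho$ is $O\big((\rho/\ell)^2+1\big)$ for every $\rho\ge\ell$: taking $\rho=2^k\ell$ gives $|A_k|=O(4^k)$ and $|A_0|=O(1)$, and then the sum is $O(1)$ since the geometric ratio $2^{2-\alpha}$ is $<1$ precisely because $\alpha>2$.

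The real work is the packing fact, which I would derive from two standard structural properties of the Euclidean MST: the edges meeting at any vertex make pairwise angles greater than $60^\circ$, so every vertex has bounded degree; and the cut/empty-lune property, namely that a chord shorter than an MST edge cannot join the two sides of the cut obtained by deleting that edge (in particular, no input point lies within distance $|uw|$ of both endpoints of an MST edge $(u,w)$). A link $j\in S_i^+$ with $d(i,j)<\rho$ has an endpoint $y_j$ in $B(s_i,\rho)\cup B(r_i,\rho)$, and $y_j$ is incident to the MST edge $j$, whose length is $\ge\ell$. By bounded degree it is enough to bound the number of such ``hub'' points $y$; the plan is to show they are pairwise $\Omega(\ell)$-separated. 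Indeed, if two hubs $y,y'$ were closer than $c\ell$, then since $y$ carries an incident MST edge $e=(y,z)$ of length $\ge\ell$ and $y'$ an incident MST edge $e'=(y',z')$ of length $\ge\ell$ while $\|y-y'\|$ is short, the cut property forces a contradiction: either $e,e'$ share an endpoint, which is handled directly, or deleting one of them separates its endpoints with $y$ on one side and $y'$ on the other, so rerouting along the short segment $yy'$ yields a strictly lighter spanning tree. With $\Omega(\ell)$-separation in hand, a packing/area count over $B(s_i,\rho)\cup B(r_i,\rho)$ gives $O\big((\rho/\ell)^2+1\big)$ hubs, hence $O\big((\rho/\ell)^2+1\big)$ links $j$.

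The step I expect to be the main obstacle is exactly this hub separation: the empty-lune property does not by itself separate two points merely because each is an endpoint of some long edge, so one has to reason through the cut property and carefully track on which side of each deleted long edge the other hub — and that edge's far endpoint — lies, while also dispatching the degenerate cases in which the two long edges share an endpoint or are nearly collinear. Everything downstream of the packing fact — the dyadic bookkeeping and the geometric-series estimate, which needs only the model assumption $\alpha>2$ — is routine, and the innermost group $A_0$ is absorbed harmlessly, each of its $O(1)$ members contributing at most $1$.
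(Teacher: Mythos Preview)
First, the paper does not prove this lemma: it is quoted from \cite{HMSODA12} and used as a black box, so there is no in-paper argument to compare against. Your overall plan --- dyadic rings by distance, reduce to a packing count, and sum a geometric series using $\alpha>2$ --- is the right shape for such a statement.

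The genuine gap is the packing step. Your hub-separation claim --- that the near endpoints of links in $S_i^+$ are pairwise $\Omega(\ell)$ apart --- is not merely hard to verify; it is false. Place $y_1,y_2,y_3$ at the vertices of a triangle of diameter $\epsilon$, and $z_1,z_2,z_3$ at distance $L\gg\epsilon$ from the centroid at angles $120^\circ$ apart, with each $y_k$ nudged slightly towards $z_k$. Since $|z_jz_k|=L\sqrt3>L$, the MST contains all three edges $(y_k,z_k)$ of length $\approx L$. Taking $i=(y_1,z_1)$, so $\ell\approx L$, the hubs $y_2,y_3$ of the other two long links are only $O(\epsilon)$ apart. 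Your cut-property sketch breaks exactly here: deleting $(y_2,z_2)$ isolates $z_2$ and leaves $y_2,y_3$ on the \emph{same} side, so the short segment $y_2y_3$ does not cross the cut. Indeed, the cut property forces any point within distance $\ell$ of $y$ onto the $y$-side of the cut for $(y,z)$ --- the opposite of what your rerouting argument assumes. What \emph{does} hold, and suffices for the packing bound, is a separation of the \emph{far} endpoints: if $(y_j,z_j)$ and $(y_k,z_k)$ are distinct MST edges of length $\ge\ell$ with $y_j,y_k$ in a common ball of radius $\ell/3$, then $y_k$ lies on the $y_j$-side of the cut for $(y_j,z_j)$, and hence so does $z_k$ (otherwise the tree edge $(y_k,z_k)$ would also cross that cut), giving $|z_jz_k|\ge|y_jz_j|$; symmetrically $|z_jz_k|\ge|y_kz_k|$. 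An angular packing of the rays from the ball's centre to the $z_k$ then bounds the number of such edges by an absolute constant, which feeds into your dyadic sum exactly as you intended.
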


\begin{theorem}
Let $T$ be the links of a MST induced by points in the plane.
Then, $\chi(G_1(T)) = O(1)$.
\label{thm:g1}
\end{theorem}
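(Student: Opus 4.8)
The plan is to bound the \emph{degeneracy} of $G_1(T)$ by an absolute constant; since any $k$-degenerate graph is $(k+1)$-colorable (color it greedily in a degeneracy order), this yields $\chi(G_1(T)) = O(1)$ at once. Concretely, I will show that every nonempty induced subgraph of $G_1(T)$ contains a vertex of constant degree, and the right choice of such a vertex is always a \emph{shortest} link of that subgraph. The only nontrivial ingredient is the sparsity Lemma~\ref{L:mainlemma}, which says precisely that for a link $i$ of an MST the links that are at least as long as $i$ place only $O(1)$ total ``$I$-mass'' near $i$.

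First I would reread the adjacency rule of $G_1$ from the point of view of a shortest link. Fix a nonempty $T' \subseteq T$ and let $i \in T'$ be of minimum length in $T'$, ties broken arbitrarily. If $j \in T'$ is a $G_1$-neighbor of $i$, then $d(i,j) \le \min(l_i, l_j)$ by definition, and since $l_j \ge l_i$ this is simply $d(i,j) \le l_i$; hence $l_i^\alpha / d(i,j)^\alpha \ge 1$ and so $I(i,j) = \min\{1,\, l_i^\alpha / d(i,j)^\alpha\} = 1$ for each such neighbor $j$. Moreover every neighbor $j$ of $i$ lies in $T_i^+$, being at least as long as $i$. Writing $N(i)$ for the neighborhood of $i$ in $G_1[T']$ and using that $I$ is nonnegative,
\[
\deg_{G_1[T']}(i) = |N(i)| = \sum_{j \in N(i)} I(i,j) \le \sum_{j \in T_i^+} I(i,j) = I(i, T_i^+).
\]

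By Lemma~\ref{L:mainlemma} applied to the MST $T$, $I(i,T_i^+) = O(1)$, with a constant depending only on $\alpha$ (not on $i$ or on $T'$). Thus $G_1(T)$ is $O(1)$-degenerate and $\chi(G_1(T)) = O(1)$. Equivalently, the greedy algorithm that colors the links in order of \emph{non-increasing} length uses $O(1)$ colors: when link $i$ is reached, its already-colored neighbors are among the links $j \ne i$ with $l_j \ge l_i$ and $d(i,j) \le l_i$, and the display above bounds their number by a constant.

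I do not expect a serious obstacle here, since the geometric heart of the matter --- that the MST is locally sparse among links of comparable-or-larger length --- is exactly what Lemma~\ref{L:mainlemma} supplies. The one point to get right is the \emph{direction} of the argument: a vertex of $G_1(T)$ can have unbounded degree (a long link may be nearly touched by arbitrarily many much shorter links, all of which are its $G_1$-neighbors), so one cannot argue via maximum degree. The degeneracy viewpoint --- remove a shortest link first, equivalently color longest links first --- charges each such cluster of short links to the short links themselves, each of which by the lemma sits near only $O(1)$ longer links; a shortest link of any subgraph is therefore always the one with bounded degree.
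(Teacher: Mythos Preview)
Your proof is correct and rests on the same ingredient as the paper's (Lemma~\ref{L:mainlemma}), but the execution is a little different and, in fact, cleaner. The paper first runs a first-fit partition of $T$ (processing links longest-first) into sets $S^1,\dots,S^t$ with $I(i,S_i^+)<1$ for every $i$ in its set, argues $t=O(1)$ via Lemma~\ref{L:mainlemma}, and then separately observes that $I(i,j)<1$ forces $d(i,j)>l_i$, so each $S^k$ is $G_1$-independent. You collapse these two steps: since a $G_1$-neighbor $j$ of a shortest link $i$ satisfies $d(i,j)\le l_i$ and hence $I(i,j)=1$, the number of such neighbors is at most $I(i,T_i^+)=O(1)$ outright, giving constant degeneracy and thus constant chromatic number without the intermediate partition. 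Both arguments amount to the same greedy coloring in non-increasing length order; yours just reads off the degree bound directly from the lemma rather than routing through the $I<1$ refinement.
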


\begin{proof}
 As argued in~\cite{HMSODA12} (Lemma 3), it is possible to refine the set $T$
into a constant number of subsets $S^1,S^2,\dots,S^t$ such that for each set $S=S^k$ in the sequence and each link $i\in S$, it holds that $I(i,S_i^+) < 1$. This can be achieved by a simple first-fit algorithm, as follows. Initially, let  $S^k=\emptyset$ for all $k$. Iterate over the links in $T$ in a non-increasing order by length, assigning each link $i$ to the first set $S=S^k$ such that $I(i,S) < 1$. Note that at this point, $S$ consists of links that are not shorter than link $i$, i.e.\ $S=S_i^+$, and all links that are added to $S$ in subsequent iterations are no longer than $i$, which means that the sets $S^k$ have the desired property. It remains to show that $t\in O(1)$. This follows easily from the refinement procedure. Take any link $i\in S^t$, i.e.\ in the last set. Since it has been ``rejected'' by all previous sets $S^k$, we have that $I(i,T_i^+) > (t-1)\cdot 1$. But we have from Lemma~\ref{L:mainlemma} that $I(i,T_i^+) =O(1)$; hence, $t=O(1)$.

Let $S=S^k$ be one of the obtained subsets. It  suffices to prove the theorem only for $S$, as the number of such subsets is bounded by a constant. Thus, we have, by the construction, that for each link $i\in S$, $I(i,S_i^+)<1$. Let us fix a link $i$ in $S$. Then for each link $j\in S_i^+$, we have that $I(i,j)<1$. Namely, we have that $1 > I(i,j)=\frac{l_i^{\alpha}}{d(j,i)^{\alpha}}$. Since $\alpha>0$, this implies that $d(i,j) > l_i$, i.e.\  by definition, links $i,j$ are independent in $G_1$. Thus, $S$ is independent and $\chi(G_{1}(S))=1$. 
Hence, $\chi(G_1(T)) \le \sum_{k=1}^t \chi(G_1(S^k)) = t = O(1)$.
\end{proof}

\textit{Remark 1.} It is worth noting that the scheduling algorithms above do not require the underlying spanning tree to necessarily be an MST. Indeed the same argument would work also in the case when an arbitrary tree satisfying Lemma~\ref{L:mainlemma} is given. This can lead the way towards a definition of approximate MST that can have  efficient schedules.

\textit{Remark 2.} The result above can be extended to the case when one has stronger connectivity requirements. Namely, as it has been shown in~\cite{HMSODA12}, it is possible to construct a $k$-edge connected graph, for which Lemma~\ref{L:mainlemma} holds with $O(1)$ replaced with $O(k^4)$. Thus, our results hold also in the case when $k$-connectivity is required for any fixed $k$.

\subsection{Distributed Scheduling} \label{sec:distributed}

There is a natural adaptation of our coloring algorithms to distributed computation, which allows distributed computation of an aggregation schedule. Recall that the graphs $G_{arb}(L)$ and $G_{obl}(L)$ can be colored within constant approximation factors by considering the links in a non-increasing order by length and picking for each link the first color not yet used by its neighbor links in the corresponding conflict graph. We assume that the link set $L$ of an MST is already formed at the beginning of the algorithm, and that nodes have a polynomial upper bound on $\Delta$ and the number of links $n$. Each node knows the lengths of the links it participates in, as well as the minimum link length $l_{min}$ (or a common lower bound, up to constant factors) in the network.

The computation is done in $\lceil \log{\Delta}\rceil$ phases consisting of $polylog(n,\Delta)$ synchronized rounds. In phase $t=1,2,\dots,T$, only links in the length class $$L_t=\{i\in L: l_i\in [2^{(t-1)}l_{min}, 2^tl_{min}) \}$$ can transmit, the others remaining silent. The links use uniform power levels, proportional to the maximum link length in $L_t$. The computation starts from the class $L_T$, containing the longest link. First, a subroutine for computing constant factor approximate schedules for nearly equal length sets, i.e.\ length classes,  is run by nodes of links in $L_t$ (e.g.~\cite{Deltaplus1}). As soon as the links in $L_t$ establish a coloring, they locally broadcast their colors with uniform power level using a \emph{local broadcast} algorithm (e.g.~\cite{HMLocalBroadcast}). This way, the links in $L_t$ notify their shorter neighbors (including the ones in $L_t$) in in the graph about their color. Then, length class $L_{t+1}$ proceeds with the coloring algorithm, and so on.

There are many details to be taken into account for implementing and evaluating the algorithm, which depend on exact assumptions and model characteristics, so the analysis below should be taken with a grain of salt. The algorithms from~\cite{Deltaplus1} for coloring length classes of links run in time $O(opt_t\log{n})$, where $opt_t$ denotes the optimum schedule length for $L_t$. Note that in our case, $opt_t=\Theta(\log^* \Delta)$ or $opt_t=O(\log\log\Delta)$, depending on the power control mode. Crucially, the nodes (which, as assumed,  have a polynomial upper bound on $\Delta$) can compute an upper bound on these quantities, thus being able to pre-allocate a time for each phase and detect the end of each phase. Even though these algorithms compute a coloring from scratch, we believe it is possible to adapt them to take into account the set of colors used by earlier links, without degrading the runtime significantly. 
The local broadcast subroutine takes $O(opt_t + \log^2{n})$ rounds (with collision detection; 
$O(opt_t\cdot \log n + \log^2{n})$ rounds without it \cite{HMLocalBroadcast}), as the contention happens only between the links in $L_t$. Thus, realization of this scheme would give a distributed computation of schedules in time $O((\log{n}\cdot\log\log\Delta+\log^2{n})\log{\Delta})$ for the case of oblivious power schemes and in time $O((\log{n}\cdot\log^*\Delta+\log^2{n})\log{\Delta})$ for the case of global power control.

\section{Impossibility Results}\label{S:impossibility}

An optimal coloring schedule need not be an optimal aggregation schedule, as the optimum rate could be achieved by a schedule that consists of an arbitrary sequence of feasible sets. The latter is related to the notion of  \emph{multicoloring} or \emph{fractional coloring}.
A classic example where a multicoloring schedule improves over a coloring schedule is the (graph) coloring of the edges of the 5-cycle.
(This example can actually be mapped to an aggregation tree in the SINR model with $\beta = 1$.)
Any proper coloring requires three colors for a rate of $1/3$, but by using the sequence $13, 24, 14, 25, 35$ (where $13$ corresponds to the feasible set of the first and third edges in the cycle), we obtain a rate of $2/5$.

Thus, it is not obvious whether one can improve on the worst-case lower bounds on aggregation capacity we obtained  using coloring schedules.

We show that the bounds we obtained above are indeed  best possible. 
In particular, $1/\Theta(\log\log \Delta)$ is the best rate achievable using oblivious power assignments.
This holds even for instances located on the line.
For arbitrary power control, we also have a matching bound, $1/\Theta(\log^* \Delta)$, but only for the rate achievable on the MST of a pointset.

\subsection{Oblivious Power Schemes}\label{S:oblivious}

Fix a power scheme $P_{\tau}$ with $\tau\in (0,1)$. Let us denote $\tau' = \min\{\tau, 1-\tau\}$. Consider the doubly-exponential 
sequence of points  $1,2,\dots,n$ shown in Fig.~\ref{fig:exponentialchain}.
\begin{figure*}[ht]
\centering
\includegraphics[width=0.7\textwidth]{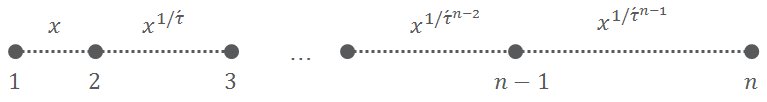}
\caption{A pointset of doubly-exponentially growing separation.}
\label{fig:exponentialchain}
\end{figure*}

The distance between points $t$ and $t+1$ is $x^{(\tau')^{-t}}$, for $t=1,2,\dots,n-1$, where $x> \min\{2, (2/\beta^{1/\alpha})^{1/\tau'}\}$ is a constant. Since $x\ge 2$, it is clear that the largest distance in the network is $\Theta(x^{(\tau')^{-n+1}})$, and the smallest distance is $x$. This implies that $n=\Theta(\log_{1/\tau'}(1+\log_{x}\log \Delta)=\Theta(\log \log \Delta)$.

We shall show that it is not possible to form two links on this point set that can be scheduled in the same slot. This implies that any aggregation schedule has to use a single link per time slot, implying an upper bound of 
$O(1/n)=O(1/\log\log \Delta)$ on the aggregation rate.

First, we define the relative interference of link $i$ on link $j$ under power assignment $P$, which is an additive operator 
$I_P(j,i)=\frac{\cI_{ji}}{\cS_i}=\frac{P(j)l_i^\alpha}{P(i)d_{ji}^\alpha}$. 
Let $I_P(i,i)=0$ and $I_P(S,i)=\sum_{j\in S}{I_P(j,i)}$ for simplicity.  
In absence of the noise term, the feasibility condition of a set $S$ of links, for a fixed
power assignment $P$, can be rewritten as $I_P(S,i) \le 1/\beta$.  When using a power scheme $P_\tau$, we have
$I_{P_\tau}(j,i)= l_j^{\tau\alpha} l_i^{(1-\tau)\alpha}/d_{ji}^\alpha$.

Consider any two links $1$ and $2$. Assume, w.l.o.g., that $2$ has the greater right endpoint among the two, and that this endpoint is the point $t+1$. That  implies that $l_2\ge x^{(\tau')^{-t}}$. If the link $1$ also uses point $t+1$, then the two links clearly can't be scheduled in the same slot. Otherwise, we have that $l_1\le x^{(\tau')^{-t+1}} \le l_2^\tau$. We consider two cases:\\
1. If the longer link $2$ is directed left to right, then it interferes with the shorter link. Since the sender node of link $2$ is to the left of point $t+1$, and the set of nodes forms a exponentially increasing sequence, its distance to any of the nodes in $\{1,2,\dots,t\}$ is at most twice the distance between points $t-1$ and $t$. Thus, we have $d_{21}\le 2 x^{(\tau')^{-t+1}}\le 2l_2^{\tau}$ (since $\tau'\le \tau$). We have further $l_1\ge x > (2/\beta^{1/\alpha})^{1/(1-\tau)}$ and
\[I_{P_{\tau}}(2, 1)=\frac{l_2^{\tau\alpha}l_1^{(1-\tau)\alpha}}{d_{21}^\alpha} \ge \frac{l_2^{\tau\alpha}l_1^{(1-\tau)\alpha}}{(2 l_2^\tau)^\alpha}=\frac{l_1^{(1-\tau)\alpha}}{2^\alpha} > 1/\beta.\]
2. If the longer link $2$ is directed right to left, then the shorter link $1$ interferes with the longer one. This time we have $d_{12}\le 2 x^{(\tau')^{-t+1}}\le 2l_2^{1-\tau}$ (since $\tau'\le 1-\tau$) and $l_1\ge x > (2/\beta^{1/\alpha})^{1/\tau}$, so
\[I_{P_{\tau}}(1, 2)=\frac{l_1^{\tau\alpha}l_2^{(1-\tau)\alpha}}{d_{21}^\alpha} \ge \frac{l_1^{\tau\alpha}l_2^{(1-\tau)\alpha}}{(2 l_2^{1-\tau})^\alpha}=\frac{l_1^{\tau\alpha}}{2^\alpha} > 1/\beta\]
Thus, links $1$ and $2$ are incompatible, so we obtain the lower bound as desired. This is summarized below.

\begin{proposition}
For any $\tau\in (0,1)$ and $\Delta>0$, there is a set $S$ of points on the real line with $\Delta(S)>\Delta$, for which every aggregation tree and schedule gives rate $\Theta(1/\log{\log{\Delta(S)}})$  when using $P_{\tau}$.
\end{proposition}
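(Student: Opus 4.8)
The plan is to exhibit, for each $\tau\in(0,1)$, a one-dimensional point set tailored to $\tau$ that admits essentially no spatial reuse under $P_\tau$, so that every aggregation schedule is forced to transmit a single link per slot. \textbf{Construction.} I would place points $1,2,\dots,n$ on the line so that the gap between point $t$ and point $t+1$ is $x^{(\tau')^{-t}}$, where $\tau'=\min\{\tau,1-\tau\}$ and $x$ is a sufficiently large constant (in particular $x\ge 2$ and $x>(2/\beta^{1/\alpha})^{1/\tau'}$). Since the gaps increase, the largest distance in $S$ is $\Theta(x^{(\tau')^{-(n-1)}})$ and the smallest is $x$, so $\Delta(S)=\Theta(x^{(\tau')^{-(n-1)}})$ and hence $n=\Theta(\log\log\Delta(S))$; choosing $n$ large enough makes $\Delta(S)>\Delta$.

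\textbf{No two links coexist.} The crux is to prove that no two links on $S$ are simultaneously $P_\tau$-feasible. Given two links, I would let link $2$ be the one whose rightmost endpoint is furthest right, say at point $t+1$, so that $l_2\ge x^{(\tau')^{-t}}$. If link $1$ also uses point $t+1$ the two links share a node and conflict trivially; otherwise both endpoints of link $1$ lie among points $1,\dots,t$, so $l_1\le x^{(\tau')^{-(t-1)}}$, which is at most $l_2^{\tau}$ and at most $l_2^{1-\tau}$ because of the super-geometric spacing. The key geometric observation, again forced by the doubly-exponential growth, is that the endpoint of link $2$ that lies among points $1,\dots,t$ is within distance $\le 2x^{(\tau')^{-(t-1)}}$ of every one of those points, since the sum of the first $t-1$ gaps is dominated by the last one. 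Recalling $I_{P_\tau}(j,i)=l_j^{\tau\alpha}l_i^{(1-\tau)\alpha}/d_{ji}^\alpha$ and that $P_\tau$-feasibility requires $I_{P_\tau}(S,i)\le 1/\beta$ for each $i$, I would split on the orientation of the longer link $2$: if link $2$ points left-to-right it interferes with link $1$ over a distance $d_{21}\le 2l_2^{\tau}$, yielding $I_{P_\tau}(2,1)\ge l_1^{(1-\tau)\alpha}/2^\alpha$; if link $2$ points right-to-left then link $1$ interferes with link $2$ over a distance $d_{12}\le 2l_2^{1-\tau}$, yielding $I_{P_\tau}(1,2)\ge l_1^{\tau\alpha}/2^\alpha$. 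In either case the bound exceeds $1/\beta$ because $l_1\ge x>(2/\beta^{1/\alpha})^{1/\tau'}$, so the two links cannot share a slot.

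\textbf{Finishing.} Since any aggregation schedule must transmit along a spanning tree of $S$, it must eventually schedule each of its $n-1$ edges, yet can place at most one link per slot; hence over any $m$ slots some link is scheduled at most $m/(n-1)$ times, giving rate $O(1/(n-1))=O(1/\log\log\Delta(S))$. Conversely, round-robin over the edges of a spanning path achieves rate $\Omega(1/n)$, so the optimal rate is $\Theta(1/\log\log\Delta(S))$, as claimed.

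\textbf{Main obstacle.} The delicate part is the case analysis on which link is longer and on the orientation of the longer link: in each sub-case the factor of $l_2$ cancelled by the distance bound carries exponent either $\tau$ or $1-\tau$, which is precisely why the ratio between consecutive gaps must be $x^{1/\tau'}$ with $\tau'=\min\{\tau,1-\tau\}$, not merely $x$ — and why a singly-exponential chain would not suffice. One also has to verify the bound $d_{ji}\le 2x^{(\tau')^{-(t-1)}}$ uniformly over the possible positions of the remaining endpoints (this needs the gap ratios themselves to grow), and confirm that the single constant $x$ simultaneously beats $1/\beta$ in both orientations.
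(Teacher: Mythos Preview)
Your proposal is correct and follows essentially the same argument as the paper: the identical doubly-exponential construction with gaps $x^{(\tau')^{-t}}$ for $\tau'=\min\{\tau,1-\tau\}$, the same reduction to showing no two links can coexist, and the same case split on the orientation of the longer link leading to $I_{P_\tau}(2,1)\ge l_1^{(1-\tau)\alpha}/2^\alpha$ or $I_{P_\tau}(1,2)\ge l_1^{\tau\alpha}/2^\alpha$. Your only additions are the explicit round-robin argument for the matching $\Omega(1/n)$ lower bound (needed for the $\Theta$) and a cleaner statement of the constraint on $x$ (the paper's ``$\min$'' should read ``$\max$'').
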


\subsection{Arbitrary Power Control}

We now show that there are instances on the real line whose MST cannot be aggregated with rate better than $O(1/\log^* \Delta)$.
The construction is a modification of a construction in \cite[Thm.~7]{HMSODA12}, that leads only to a lower bound of $O(1/\log(\log^*\Delta))$ and applies only to colorings, not general aggregation schedules.

We will use the following theorem from~\cite{us:stoc15} in the proof. Recall the additive operator $I$ defined in Sec.~\ref{sec:bounding}: $I(j,i)=\min \{1, l_j^\alpha/d(i,j)^\alpha\}$, for links $i,j$. 

\begin{theorem}\cite[Thm. 8]{us:stoc15}\label{T:necessary}
Let $S$ be a set of links that is feasible with $\beta=3^\alpha$. Then $I(S_i^-,i) = O(1)$ for each $i\in S$.
\end{theorem}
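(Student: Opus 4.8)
The plan is to argue directly from the SINR definitions, using the large threshold $\beta=3^\alpha$ at the crucial step. Fix a power assignment $P$ under which $S$ is feasible and fix a link $i$. The constraint at receiver $i$ reads $I_P(S,i)=\sum_{j\ne i}I_P(j,i)\le 1/\beta$ with $I_P(j,i)=P(j)l_i^\alpha/(P(i)d_{ji}^\alpha)$, so in particular $I_P(j,i)\le 1/\beta$ for every $j$, and likewise $I_P(i,j)\le 1/\beta$ from the constraint at $r_j$. First I would record a pairwise separation property of feasible sets: multiplying $I_P(j,i)\le 1/\beta$ by $I_P(i,j)\le 1/\beta$ cancels the powers and gives $(d_{ij}d_{ji})^\alpha\ge\beta^2(l_il_j)^\alpha$, i.e.\ $d_{ij}d_{ji}\ge\beta^{2/\alpha}l_il_j=9\,l_il_j$; combined with the triangle inequality $d_{ij},d_{ji}\le d(i,j)+l_i+l_j$ and a short case analysis on which endpoints are nearest, this forces any two links of comparable length $\ell$ in $S$ to satisfy $d(i,j)=\Omega(\ell)$. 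Then I would split $S_i^-$ into the \emph{far} links ($d(i,j)>l_i$) and the \emph{close} links ($d(i,j)\le l_i$), bounding their contributions to $I(S_i^-,i)$ separately.

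For the far links, $I(j,i)=(l_j/d(i,j))^\alpha$ and the bound is purely combinatorial. Grouping far links by length class ($l_j=\Theta(2^{-m}l_i)$, $m\ge 0$) and distance band ($d(i,j)=\Theta(2^a l_i)$, $a\ge 0$), the $\Omega(2^{-m}l_i)$-separation together with the bounded doubling dimension of the ambient metric bounds the number of links in cell $(a,m)$ by $O(4^{a+m})$ (they lie within a region of diameter $O(2^a l_i)$), each contributing $O(2^{-(a+m)\alpha})$. Summing the cell totals $O(2^{(2-\alpha)(a+m)})$ over $a,m\ge 0$ yields $\sum_{s\ge 0}(s+1)2^{(2-\alpha)s}=O(1)$, which is finite precisely because $\alpha>2$.

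For the close links I would charge each $j$ to an endpoint $p\in\{s_i,r_i\}$ realizing $d(i,j)$ and invoke the SINR constraints, which is where $\beta=3^\alpha$ is essential. If $p=r_i$, then $d_{ij}=d(s_i,r_j)\le 3l_i$, so $I_P(i,j)\le 1/\beta$ forces $P(j)\ge P(i)(l_j/l_i)^\alpha$; since also $d_{ji}\le d(i,j)+l_j$, this gives $I_P(j,i)\ge 2^{-\alpha}I(j,i)$, hence $\sum_{j\to r_i}I(j,i)\le 2^\alpha I_P(S,i)\le 2^\alpha/\beta=O(1)$. If $p=s_i$, then $d_{ij}\le 2d(i,j)$ (the subcase $d(i,j)<l_j$ handled directly), so $I_P(i,j)\le 1/\beta$ gives $I(j,i)\le 2^\alpha P(j)/(\beta P(i))$; moreover every such $j$ has $d_{ji}\le 3l_i$, so the constraint at $i$ gives $\sum_{j\to s_i}P(j)\le (3l_i)^\alpha\sum_{j\to s_i}P(j)/d_{ji}^\alpha\le (3l_i)^\alpha\cdot P(i)/(\beta l_i^\alpha)=P(i)$, whence $\sum_{j\to s_i}I(j,i)\le 2^\alpha/\beta=O(1)$. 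Adding the three $O(1)$ bounds proves the theorem.

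I expect the close-links analysis — particularly the sender-side case $p=s_i$ — to be the main obstacle: there one cannot simply charge $I(j,i)$ against link $i$'s own SINR slack as in the case $p=r_i$, and ruling out a dense, multi-scale accumulation of short links near the \emph{sender} of $i$ is exactly what forces the strong threshold $\beta=3^\alpha$ (through the power bound $\sum_{j\to s_i}P(j)=O(P(i))$); the remaining steps are routine bookkeeping. In the paper's development this statement is simply quoted from~\cite{us:stoc15}.
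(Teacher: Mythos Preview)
As you correctly note at the end, the paper does not prove this statement; it simply cites \cite[Thm.~8]{us:stoc15}. So there is no paper proof to compare against, and the question is whether your sketch stands on its own.

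It does, and the structure (pairwise separation from $\beta=3^\alpha$, far/close split, packing for far links, SINR charging for close links) is essentially the standard route. Two places deserve a word of tightening. First, the separation claim ``links of comparable length $\ell$ satisfy $d(i,j)=\Omega(\ell)$'': the product bound $d_{ij}d_{ji}\ge 9\,l_il_j$ together with $d_{ij},d_{ji}\le d(i,j)+l_i+l_j$ gives only $d(i,j)\ge 2l_j-l_i$ when $l_j\le l_i$, which is vacuous once $l_i>2l_j$. The stronger fact you actually need, and which your ``short case analysis on which endpoints are nearest'' must be supplying, is that \emph{any} pair in $S$ satisfies $d(i,j)\ge\min(l_i,l_j)$; this follows by combining the two individual constraints $I_P(i,j),I_P(j,i)\le 3^{-\alpha}$ at the nearer endpoint (if $d(i,j)<l_j\le l_i$ one gets $P(j)/P(i)<(2/3)^\alpha$ and $P(j)/P(i)\ge 1$ simultaneously). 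With this in hand your factor-two length classes really are $\Omega(2^{-m}l_i)$-separated and the packing bound $O(4^{a+m})$ is legitimate. Second, the ``subcase $d(i,j)<l_j$ handled directly'' in the $p=s_i$ branch: by the same separation fact this subcase is in fact empty, so there is nothing to handle, but you should say so rather than leave it hanging. With those two points made explicit the argument is complete.
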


\begin{theorem}\label{T:hardmst}
For any $\Delta$, there is a pointset on the real line whose minimum spanning tree $T$  has optimal aggregation rate of $O(1/\log^* \Delta(T))$
and $\Delta(T)>\Delta$.
\end{theorem}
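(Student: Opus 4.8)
The goal is a line instance whose MST $T$ simultaneously has huge length diversity ($\Delta(T)>\Delta$) and admits no aggregation schedule of rate asymptotically better than $1/\log^*\Delta$. The natural strategy is to build a chain of points on the line where the gap lengths grow in a tower-of-exponentials (iterated-exponential) fashion, so that the MST is exactly the path through consecutive points, and its $\log^*$ of the length diversity equals $\Theta(\ell)$ where $\ell$ is the number of "levels" of the tower. Concretely, I would fix a base $b$ and define gap lengths $g_0=1$, $g_{k+1}= g_k \cdot b^{g_k}$ (or $g_{k+1}=b^{g_k}$ up to tuning), arrange a block of several points at scale $g_k$ for each level $k=0,1,\dots,\ell$, and check that consecutive points are always the nearest neighbors so that Kruskal/Prim yields the path — this requires the standard separation argument (the jump to the next scale is so large that no "shortcut" edge is ever shorter than the path edge it would replace). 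Then $\Delta(T)=g_\ell/g_0$ is a tower of height $\ell$, so $\log^*\Delta(T)=\Theta(\ell)$, and I pick $\ell$ large enough that $\Delta(T)>\Delta$.

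The core of the argument is the lower bound on scheduling. Unlike the oblivious case (where one shows no two links are ever compatible), here arbitrary power is allowed, so some parallelism exists; the claim is only that parallelism is limited to $O(1)$ links per level-scale, and crucially that it cannot be "amortized across levels" even by a fractional/multicoloring schedule. The key tool is Theorem \ref{T:necessary}: any feasible set $S$ (after rescaling $\beta$ to $3^\alpha$, which costs only constants) satisfies $I(S_i^-, i)=O(1)$ for every $i$. I would apply this at the link $i$ that is the single path-edge at the top level $\ell$: every other path-edge $j$ lives at a strictly smaller scale $g_k$, $k<\ell$, and because the geometry is a chain, all those shorter links are packed within distance $O(g_\ell)$ of $i$ (the endpoints of link $i$ span the whole top gap, and the lower-level blocks sit adjacent to it). For such a shorter link $j$ at scale $g_k$, $I(j,i)=\min\{1, l_j^\alpha/d(i,j)^\alpha\}$; when $j$ is close to an endpoint of $i$ the distance $d(i,j)$ is at most a constant times $l_j$ (both are $O(g_k)$ and $j$ touches $i$), so $I(j,i)=\Omega(1)$. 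Since $I(S_i^-,i)=O(1)$, only $O(1)$ of the shorter links can be in $S$ together with $i$. Recursing this within each level — inside the level-$k$ block, the top link of that block plays the role of $i$ and sees all the level-$(<k)$ links as shorter and nearby — one gets that any single feasible set contains only $O(1)$ links from each level, hence $O(\ell)$ links total, but more importantly that the "conflict structure" across levels forms a nested sequence: a link at level $k$ cannot be scheduled with more than $O(1)$ links from levels $<k$ present.

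To turn a per-slot bound into a rate bound that defeats even arbitrary (multicoloring) schedules, I would set up the instance so each level contributes $\Theta(1)$ links that must each be served at rate $\rho$, and argue by a charging/averaging argument over a long window of $m$ slots: the top-level link must appear $\rho m$ times, and each such appearance "blocks" all but $O(1)$ of the next level's links in that slot, so the next level's links collectively need their $\rho m$ appearances to fit into slots that are in a bounded-degree relation with the top link's slots — iterating across $\ell$ levels forces the window to be long enough that $m \geq \Omega(\ell)\cdot(\text{per-level demand})$, i.e. $\rho = O(1/\ell) = O(1/\log^*\Delta(T))$. I expect the main obstacle to be exactly this last step: converting the "$O(1)$ links per level per slot, nested across levels" structure into a clean recursion that rules out fractional schedules, since multicoloring can in principle interleave cleverly (as the 5-cycle example warns). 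The fix is to make the dependency genuinely hierarchical — ensure that serving any link forces near-exclusive use of the slot with respect to an entire sub-instance one level up — so that the recursion $\mathrm{rate}(\ell) \le \mathrm{rate}(\ell-1)/(1+\Omega(1)) $-type inequality, or more carefully an additive one $1/\mathrm{rate}(\ell) \ge 1/\mathrm{rate}(\ell-1) + \Omega(1)$, holds and unwinds to $\Omega(\ell)$. I would also double-check the boundary issue that Theorem \ref{T:necessary} is about $S_i^-$ (shorter links) which is exactly the direction the recursion needs, and that the constant in Lemma \ref{L:mainlemma}/the $O(1)$ per level does not secretly depend on $\ell$.
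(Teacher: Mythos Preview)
Your high-level plan---a recursive line instance, Theorem~\ref{T:necessary} applied to the longest link, and an additive recursion on $1/\mathrm{rate}$---matches the paper, but there is a real gap in the construction. You assert that for every shorter link $j$ at scale $g_k$, ``$d(i,j)$ is at most a constant times $l_j$ (both are $O(g_k)$ and $j$ touches $i$)''. This holds only for the link immediately adjacent to $i$. If $j$ sits two or more levels below, the intervening gap of scale $g_{k+1}\gg g_k$ separates $j$ from $i$, so $d(i,j)\ge g_{k+1}$ and $I(j,i)\le(g_k/g_{k+1})^\alpha$, which is negligible under tower growth. Hence $I(S_i^-,i)$ is dominated by a single $\Theta(1)$ term, and Theorem~\ref{T:necessary} tells you nothing beyond ``$i$ conflicts with its neighbour''---a path-like constraint that cannot force $\Omega(\ell)$ slots. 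With only $\Theta(1)$ links per level you have no averaging leverage, and the recursion you sketch does not go through; indeed nothing in your instance rules out a constant-rate schedule.

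The paper repairs exactly this point: $R_{t+1}$ is one long link $j$ attached to $k_{t+1}=c/\rho(R_t)$ concatenated \emph{scaled copies} of $R_t$, where $\rho(R_t)=\min_{i\in R_t}(l_i/\hat d_i(R_t))^\alpha$ and copy $s$ is scaled so that its longest link equals the diameter of copies $1,\dots,s{-}1$. This scaling ensures that \emph{every} link $i$ in \emph{every} copy satisfies $d(i,j)\le 2\hat d_i(R_t^s)$, hence $I(i,j)\ge\rho(R_t)/2^\alpha$ uniformly. Now Theorem~\ref{T:necessary} limits any feasible slot containing $j$ to at most $k_{t+1}/2$ other links; averaging over the $k_{t+1}$ copies produces one copy that is absent from at least half of the slots containing $j$, and induction on that copy gives rate $\le 2/(t+1)$. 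Since $k_{t+1}\le c\,\Delta(R_t)$, the diversity still grows only single-exponentially per level, yielding $t=\Omega(\log^*\Delta)$. The missing idea in your plan is precisely this many-copies-with-calibrated-scaling step that makes the contribution $I(i,j)$ uniformly bounded below across the whole sub-instance.
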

\begin{proof}
 Recall that for any set of nodes at different points on the line, the edges of their unique MST are obtained by connecting each node to the closest nodes from both sides. In order to not complicate the notation, for each set $R$ of nodes, we  identify $R$ with the MST over $R$. We recursively construct instances $R_t$ for $t=1,2,\dots$ such that $t=\Omega(\log^*{\Delta(R_t)})$ and $R_t$ cannot be scheduled using less than $t$ slots. For simplicity, we assume that $\beta=3^\alpha$. Note, however, that the bound can be extended to arbitrary $\beta\ge 1$: if there was an aggregation schedule with rate $r$ for $\beta=1$, we could transform it into a schedule of rate $\Omega(r)$ by partitioning every feasible set in the schedule into a constant number of subsets, each feasible with $\beta=3^\alpha$ (see, e.g. \cite{GoussevskaiaHW14}).

Given the instance $R_t$, we construct $R_{t+1}$  by joining together different scaled copies of $R_t$ and adding another long link $\ell$ to this construction. The sub-instances $R_t$ are scaled and placed so that the following property holds: if one chooses any collection of links $i_1, i_2,\dots,i_s$, one link from each copy of $R_t$, then at most half that set can be scheduled in the same slot as the long link $\ell$. We will then use this property to inductively prove the rate bound.

We will use the following notations. For an instance $R$, we will use $diam(R)$ to denote the diameter of $R$, i.e.\ the maximum distance between nodes in $R$.
For two instances $R$ and $G$, $R\oplus G$ denotes the joining of the two instances, which is a new instance on $|G|+|R|- 1$ nodes, having $R$ on the left side and $G$ on the right side, where the two sub-instances have one common node.
Note that the edges of the MST of $R\oplus G$ is the union of the edges of MSTs of $R$ and $G$.
For an instance $R$ and a link $i\in R$, $\hat{d}_i(R)$ is the maximum distance from either endpoint of $i$ to the leftmost point of $R$. We also denote \[\rho(R)=\min_{i\in R}{\frac{l_i^\alpha}{\hat{d}_i(R)^\alpha}}.\] Note that for each $R$, $\rho(R)\leq 1$.

Now we are ready to describe the construction. The instance $R_1$ consists of two nodes at a distance $1$.  Let us assume that the instance $R_t$ for $t\ge 1$ is constructed with the desired properties. Then $R_{t+1}$ is constructed using many scaled copies of $R_t$ (see Fig.~\ref{fig:mstconstruction}).
 First we define $R'_{t+1}=\oplus_{1\leq s\leq k_{t+1}}{R_{t}^s}$, which is the concatenation of $k_{t+1}=\frac{c}{\rho(R_{t})}$ scaled copies of $R_t$ such that  $R_{t}^1$ is an identical copy of $R_{t}$ and for each $s>1$, $R_{t}^s$ is a copy of $R_{t}$, scaled so that the maximum link length of $R_{t}^s$ is equal to $diam(\oplus_{1\leq r\leq s-1}{R_{t}^r})$, and $c$ is a large enough constant, determined in Claim~\ref{C:mstX}. Then we define $R_{t+1}=G\oplus R'_{t+1}$ where $G$ is an instance consisting of two points at a distance $diam(R'_{t+1})$. This completes the construction of $R_{t+1}$. 
\begin{figure*}[htbp]
\centering
\includegraphics[width=0.7\textwidth]{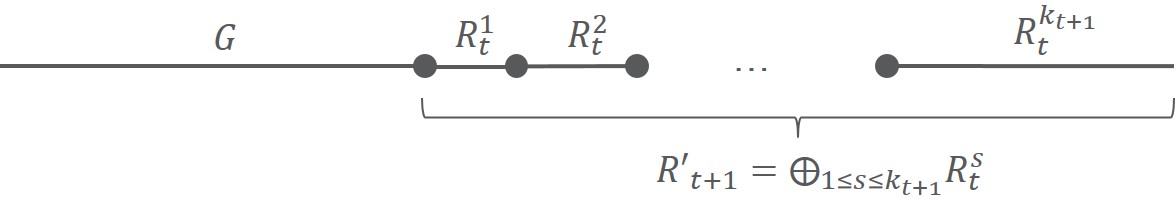}
\caption{The construction of $R_{t+1}$.}
\label{fig:mstconstruction}
\end{figure*}

We now show that the aggregation rate that can be attained on the MST $T$ of $R_{t+1}$ is at most $2/(t+1)$.
%
The proof is by induction on $t$. The statement is trivially true for the base case $t=1$.
Suppose the statement holds for a given $t$ and consider the case of $t+1$.
Consider an arbitrary aggregation schedule $S_1,S_2,S_3,\dots$ of $R_{t+1}$. Assume that there is an integer $m_0$, such that for every $m\ge m_0$, the long link $j$ is scheduled in at least $2m/(t+1)$ slots among the first $m$ slots. If there is no such $m_0$ then we are done, as the schedule does not achieve rate $2/(t+1)$. Consider an arbitrary integer $m\ge m_0$.

Let $T_1, T_2,\ldots, T_k$ be the feasible sets from $S_1,S_2,\dots,S_m$ containing the link $j$. Since the rate $2/(t+1)$ is achieved on $j$, 
we have $k\ge 2m/(t+1)$.
Recall that $R_{t+1}$ consists of $j$ and $k_{t+1}$ copies of $R_{t}$.

\begin{claim}\label{C:mstX}
$|T_q| \le k_{t+1}/2$ for each $q=1,2,\ldots, k$.
\end{claim}

\begin{proof}
By construction, $j$ is the longest link in $R_{t+1}$.
Let us fix a link $i\in R_{t}^s$ for some $s\geq 1$. Observe that 
$
d(i,j)\le \hat{d}_i(R_{t}^s) + diam(\oplus_{1\leq r\leq s-1}{R_{t}^r}). 
$
By construction, $\hat{d}_i(R_{t}^s) \ge l_{j_s} =diam(\oplus_{1\leq r\leq s-1}{R_{t}^r})$, where $j_s$ is the longest link in $R_{t}^s$. This follows from the fact that $j_s$ is the leftmost link in $R_{t}^s$. Thus, we have that $d(i,j) \le 2 \hat{d}_i(R_{t}^s)$ and so 
\[ I(i,j)=\frac{l_i^{\alpha}}{d(i,j)^\alpha} \ge \frac{l_i^\alpha}{2\hat{d}_i(R_{t}^s)}
\ge \frac{\rho(R_{t}^s)}{2^{\alpha}} =\frac{\rho(R_t)}{2^\alpha}\ , \]
where the last equality follows from the fact that $\rho(R_t)$ is scale-invariant. 
Now, if there was a feasible set $T$ containing link $j$ and $k_{t+1}/2$ other links, we would have 
$I(T_i^-, i) \ge k_{t+1}/2 \cdot \rho(R_t)/2^\alpha = c/2^{\alpha+1}$, 
which would contradict Thm.~\ref{T:necessary}, if we choose $c$ sufficiently large.
\end{proof}

Therefore, $\sum_{q=1}^{k} |T_q| \le k\cdot k_{t+1}/2$, 
which implies there is a copy $R^{s}_t$ containing links from at most $k/2$.
In other words, at least $m-k/2$ of the sets $T_q$ contain no link of copy $R_t^s$. 
This means that at most $k/2$ sets $T_q$ contain a link from $R_t^s$, and conversely there are at least $k/2$ sets $T_q$ that contain no link from $R_t^s$.
Thus, there are at most $m-k/2$ sets in the original sequence $S_1,\dots,S_m$ that can contain a link in $R_t^s$. 
By induction, there is no schedule with rate more than $2/t$ for $R_t^s$, so there is a link $i\in R_t^s$ that appears in at most $2/t$ fraction of those sets, i.e.\ in at most $\frac{2}{t}(m-k/2)=2m/(t+1)$ sets in the sequence  $S_1,\dots,S_m$. Since this holds for arbitrary $m\ge m_0$, this proves that $2/(t+1)$ is the best possible rate.

\smallskip

It remains to show that $t=\Omega(\log^*{\Delta(R_t)})$. Note that the longest link of $R_{t}$ is the link including the leftmost point of $R_{t}$ and the shortest link has length 1.
That implies that $1/\rho(R_{t}) \leq \Delta(R_{t})$; hence, $k_{t}= c/\rho(R_{t-1}) \le c\Delta(R_{t-1})$. Note also that for each $s>1$, $diam(R_{t-1}^s) \le 2diam(R_{t-1}^{s-1})$. It follows that 
\begin{align*}
\Delta(R_t)&=2diam(R'_t)
=2\sum_{1}^{k_t}{diam(R_{t-1}^s)}\\
&\leq 2diam(R_{t-1})\sum_{1}^{k_t}{2^s}
\leq 4\Delta(R_{t-1})\cdot 2^{k_t}\\
&\le a^{\Delta(R_{t-1})},
\end{align*}
where $a$ is a constant. Thus, $\Delta(R_{t-1})= \Omega(\log\Delta(R_t))$, which implies that $t= \Omega(\log^*{\Delta(R_t)})$, recalling that $\Delta(R_1)=1$. This completes the proof.
\end{proof}

\textit{Remark.} Note the difference between the bounds obtained for the case of oblivious power schemes and global power control. Namely, in terms of the number of nodes $n$, the bound for the case of oblivious power schemes is $O(1/n)$, while in the case of global power control, it is only $O(1/\log^* n)$. While the former bound clearly cannot be improved, the latter is far from the known lower bound $\Omega(1/\log n)$ in terms of $n$ alone.

\section{Is MST Optimal for Aggregation?}
\label{sec:mstoptimal}

We showed that an MST of any set of nodes has a short coloring schedule. However, we saw that the length of the schedule is not constant in general, although grows very slowly with the size of the network. A natural question from the theoretical perspective is, whether one can find other spanning trees that can be colored using even less slots. In particular, intuition tells us that at least for linear networks, MST should be optimal (possibly up to constant factors). 

This intuition is supported by the case of uniform and linear power schemes ($P_0$ and $P_1$).  Let us assume that a threshold value $\beta\ge 3^\alpha$ is used (this assumption affects only constant factors \cite{GoussevskaiaHW14}). This choice of $\beta$ ensures that any two links in a $P_0$-(or $P_1$-)feasible set of links must be separated by a distance more than the length of the longer link. For example, if $i,j$ are in the same $P_0$-feasible set, then $d_{ij} > \beta^{1/\alpha} l_j\ge 3l_j$ and similarly, $d_{ji} > 3l_i$. This implies, using the triangle inequality, that the minimum distance between links $i,j$ is at least $\max\{l_i,l_j\}$. Consider a set of nodes arranged on the line and an arbitrary spanning tree $T$ of this set. This means that links in the same feasible set must be non-overlapping. On the other hand, since MST always connects closest pairs of nodes on the line, one can make a correspondence between the edges of $T$ and the MST, such that by the non-overlapping property above, for each feasible subset of $T$, the  corresponding subset of the MST is also feasible.
It is not difficult to make this intuitive argument rigorous and show that for the uniform and linear power schemes the MST is optimal  on the line (modulo constant factors).

\begin{proposition}
For any set of nodes on the line, the MST is a constant-factor approximate solution for the aggregation scheduling problem for power schemes $P_{0}$ and $P_1$.
\end{proposition}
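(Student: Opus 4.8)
The plan is to transform any aggregation schedule of an arbitrary spanning tree $T$ of the pointset into an aggregation schedule of the MST whose rate is smaller by at most a constant factor; since the best rate on the MST is trivially no larger than the best rate over all trees, this yields the claimed constant-factor approximation. Throughout I will assume $\beta \ge 3^\alpha$, which costs only a constant factor in the rate \cite{GoussevskaiaHW14}; as noted in the text, this guarantees that any two links $i,j$ in a $P_0$- or $P_1$-feasible set satisfy $d(i,j) \ge \max\{l_i, l_j\}$, so in particular their segments on the line are disjoint and separated by at least the longer of the two lengths.

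The combinatorial core is a structural map from $T$ to the MST. Order the points $x_1 < x_2 < \dots < x_n$; the MST edges are exactly the $n-1$ consecutive ``gaps'' $(x_i, x_{i+1})$, and a $T$-edge $e=(x_a,x_b)$ with $a<b$ covers precisely the gaps $a, a+1, \dots, b-1$, each of which has length at most $l_e$ and lies inside the segment of $e$. I claim there is a bijection $\phi$ from the edges of $T$ to the gaps such that each $e$ is matched to a gap it covers. By Hall's theorem it suffices to show that any set $A$ of $T$-edges covers at least $|A|$ distinct gaps. To see this, partition the covered gaps into maximal runs of consecutive gaps $U_1, U_2, \dots$; every $T$-edge's gap-interval is connected and hence lies inside a single run, so $A$ splits into groups $A_1, A_2, \dots$, and the edges of $A_q$ have all their endpoints among the $|U_q|+1$ points bounding $U_q$; being a forest, $|A_q| \le |U_q|$. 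Summing, $|A| = \sum_q |A_q| \le \sum_q |U_q|$, the number of covered gaps, as required. Then $\phi$, being a perfect matching in this bipartite graph with $n-1$ vertices on each side, is a bijection, and by construction the segment of $\phi(e)$ is contained in the segment of $e$ and $l_{\phi(e)} \le l_e$.

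Next I will show that $\phi$ preserves feasibility up to a constant factor. Fix a $P_0$-feasible set $I$ of $T$-edges. For $e \ne e'$ in $I$, the image links $\phi(e),\phi(e')$ are shorter and lie inside the disjoint, separated segments of $e$ and $e'$, so every node-to-node distance between them is at least $d(e,e')$; together with the separation $d(e,e') \ge \max\{l_e,l_{e'}\}$ (which forces all node-to-node distances between $e$ and $e'$ to lie within a factor $3$ of one another), this gives $I_{P_0}(\phi(e'),\phi(e)) \le 3^\alpha\, I_{P_0}(e',e)$. Summing over $e'$, the total interference on $\phi(e)$ is at most $3^\alpha$ times that on $e$, hence at most $3^\alpha/\beta$, so $\phi(I)$ decomposes into $O(1)$ $P_0$-feasible sets. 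The same computation applies verbatim to $P_1$, since $I_{P_1}(j,i) = l_j^\alpha/d_{ji}^\alpha$ depends only on the interferer's length and the inter-link distance, both of which move favorably under $\phi$.

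Finally I will assemble the pieces: given an aggregation schedule $I_1, I_2, \dots$ of $T$ (under $P_0$, resp.\ $P_1$) with rate $\rho$, replace each $I_k$ by the constantly many feasible subsets into which $\phi(I_k)$ decomposes; the schedule length grows by only a constant factor, so the resulting MST schedule has rate $\Omega(\rho)$. Since $\phi$ is a bijection, every MST edge appears, and the MST oriented towards the sink is acyclic, so this is a valid aggregation schedule. Hence the MST's optimal rate is within a constant factor of that of any tree under $P_0$ and $P_1$. I expect the main obstacle to be the combinatorial lemma producing the bijection $\phi$ with the containment property; the rest is a short interference calculation once $\beta \ge 3^\alpha$ is assumed.
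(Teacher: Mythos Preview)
Your proposal is correct and follows exactly the route the paper sketches in the paragraph preceding the proposition: exploit the $\beta \ge 3^\alpha$ separation to get non-overlapping links in a feasible set, build a bijection from $T$-edges to MST gaps such that each image lies inside its preimage, and conclude that feasibility carries over up to a constant. The paper leaves this as ``not difficult to make rigorous''; your use of Hall's theorem (via the forest bound $|A_q|\le |U_q|$ on each maximal run of covered gaps) and the explicit $3^\alpha$ interference comparison supply precisely the missing details.
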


Does this hold for all other oblivious power schemes? A straightforward check reveals that for the examples considered in Sec. \ref{S:oblivious}, the MST is optimal, as one cannot construct \emph{any} $P_{\tau}$-feasible pair of links on the given set of nodes. However, despite these supporting facts, it turns out that the intuition may be deceptive in this case. 

Below, we present a family of network instances on the real line together with a spanning tree $T$, such that $T$ has a constant length coloring schedule, but the MST cannot be colored with less than $\Theta(\log\log{\Delta})=\Theta(n)$ slots. This holds for any $\tau \in (0,2/5]\cup [3/5,1)$.

First, assume that $\tau \le 2/5$.
We explain the construction using eight nodes on the real line, but the argument is general and is straightforwardly extended to obtain an infinite family of networks with similar properties. Consider the network consisting of 8 nodes and the spanning tree shown in Fig.~\ref{fig:mstsubopt}.
\begin{figure}[htbp]
\centering
\includegraphics[width=0.47\textwidth]{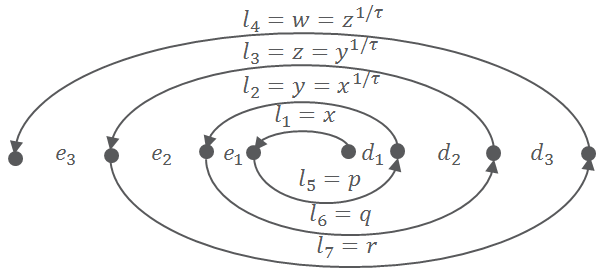}
\caption{A network showing sub-optimality of MST on the line.}
\label{fig:mstsubopt}
\end{figure}
Note that the given set of links can be augmented by a single link to yield a strongly connected graph. The shortest link has length $l_1=x$, where $x$ is a large enough constant. The lengths of other links are defined in terms of $x$, as follows: 
\begin{align*}
l_2=y &= x^{1/\tau},  &&l_5=p = y^{\tau}x^{1-\tau + \tau^2}\\
l_3=z &= y^{1/\tau},  &&l_6=q = z^{\tau}y^{1-\tau + \tau^2}\\
l_4=w & = z^{1/\tau}, &&l_7=r  = w^{\tau}z^{1-\tau + \tau^2}.
\end{align*}
All other distances in the network are determined through the lengths of the links.

\begin{claim} 
The sets $S=\{1,2,3,4\}$ and $S' = \{5,6,7\}$ are $P_\tau$-feasible.
\end{claim}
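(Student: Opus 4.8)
The plan is to verify the feasibility condition directly from the definition, case by case. Recall that, ignoring the noise term, a set $S$ is $P_\tau$-feasible if and only if $I_{P_\tau}(S,i)=\sum_{j\in S\setminus\{i\}} I_{P_\tau}(j,i)\le 1/\beta$ for every $i\in S$, where $I_{P_\tau}(j,i)=l_j^{\tau\alpha}l_i^{(1-\tau)\alpha}/d_{ji}^\alpha$. So I must show $I_{P_\tau}(S,i)<1/\beta$ for each of the four links $i\in S$, and $I_{P_\tau}(S',i)<1/\beta$ for each of the three links $i\in S'$. The first step is to read off from the construction (Fig.~\ref{fig:mstsubopt}) the pairwise distances $d_{ij}$ between the nodes of the relevant links; by design each such distance is a fixed power of $x$, since the nodes are laid out on the line in increasing order of link length and the remaining separations are determined by the $l_i$'s.

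The heart of the argument is to bound a \emph{single} term $I_{P_\tau}(j,i)$ with $i,j$ in the same set, showing it is at most $x^{-c\alpha}$ for a constant $c=c(\tau)>0$ bounded away from $0$. Fix $i,j$ with $l_i\le l_j$ and split according to the direction of interference. For the interference of the longer link on the shorter one, $I_{P_\tau}(j,i)=\bigl(l_j^\tau l_i^{1-\tau}/d_{ji}\bigr)^\alpha$, so it suffices that $d_{ji}\ge x^{c}\cdot l_j^\tau l_i^{1-\tau}$; for the interference of the shorter link on the longer one, $I_{P_\tau}(i,j)=\bigl(l_i^\tau l_j^{1-\tau}/d_{ij}\bigr)^\alpha$, so it suffices that $d_{ij}\ge x^{c}\cdot l_i^\tau l_j^{1-\tau}$. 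Both inequalities should follow from the explicit distances together with the doubly-exponential growth $l_{k+1}=l_k^{1/\tau}$ (which makes $l_i$ a negligible power of $l_j$) and the definitions $l_5=l_2^\tau l_1^{1-\tau+\tau^2}$, $l_6=l_3^\tau l_2^{1-\tau+\tau^2}$, $l_7=l_4^\tau l_3^{1-\tau+\tau^2}$; one then checks that the exponent of $x$ in $d/(l_j^\tau l_i^{1-\tau})$ (resp.\ $d/(l_i^\tau l_j^{1-\tau})$) stays positive, and this is precisely where the hypothesis $\tau\le 2/5$ is used (and, symmetrically, $\tau\ge 3/5$ in the second regime of the construction).

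Given the per-term bound, the conclusion is immediate: $S$ has four links, so $I_{P_\tau}(S,i)\le 3x^{-c\alpha}$, and $S'$ has three links, so $I_{P_\tau}(S',i)\le 2x^{-c\alpha}$; both are below $1/\beta$ once $x$ is taken a large enough constant depending only on $\alpha,\beta,\tau$ — which is exactly the role of the assumption that ``$x$ is large enough''. The step I expect to be the main obstacle is the case analysis for the \emph{longest} link of each set (link $4$ in $S$, link $7$ in $S'$): because the shorter links sit geometrically inside the span of the longest one, one has to be careful about which endpoint (sender versus receiver) of the long link is the closest to the short links, and to confirm that the construction nevertheless keeps the separation above the threshold $x^{c}l_j^\tau l_i^{1-\tau}$. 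Tracking these endpoint distances and verifying that the resulting exponent of $x$ remains positive is where the specific choice of the connector lengths $l_5,l_6,l_7$ and the range $\tau\in(0,2/5]$ really enter.
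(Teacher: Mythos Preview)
Your plan is the same as the paper's: verify directly that every pairwise interference term $I_{P_\tau}(j,i)$ is at most $x^{-c\alpha}$ for some constant $c=c(\tau)>0$, then sum. Two small corrections to your expectations are worth flagging.

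First, the distances $d_{ij}$ are not literally fixed powers of $x$; they are sums and differences of the link lengths laid out on the line, and one needs a preliminary estimate such as $p\le y^{\tau(2-\tau+\tau^2)}<y/2$ (hence $e_1=y-p>y/2$, and similarly $e_2>z/2$, $e_3>w/2$) before each distance can be replaced by its dominant power-of-$x$ term. The paper does exactly this as its first step.

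Second, you locate the delicate case at the \emph{longest} link of each set. In the paper the bounds at the long receivers (link $4$ in $S$, link $7$ in $S'$) are in fact the easy ones once the $e_k$ estimates are in hand; the computation that actually pins down the range of~$\tau$ is at the \emph{shortest} link of $S'$, namely $I_{P_\tau}(6,5)=q^{\tau\alpha}p^{(1-\tau)\alpha}/d_{65}^\alpha$ with $d_{65}=y$. Writing $p=y^{\tau(2-\tau+\tau^2)}$ and $q=y^{2-\tau+\tau^2}$, the exponent of $y$ in $I_{P_\tau}(6,5)$ is $-\alpha\gamma$ with $\gamma=1-4\tau+4\tau^2-3\tau^3+\tau^4$, and the requirement $\gamma>0$ is precisely what restricts $\tau$ to the stated range. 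So when you carry out the case analysis, expect the constraint on $\tau$ to emerge from the short-link cases in $S'$ (and, correspondingly, the specific choice of $p,q,r$ to show up in $I_{P_\tau}(2,1)$, $I_{P_\tau}(3,1)$, $I_{P_\tau}(4,1)$), not from the long-link cases.
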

\begin{proof}
We first show that $S = \{1,2,3,4\}$ is $P_\tau$-feasible.
Note that if $x$ is large enough and $\tau \le 2/5$ then 
\[p =y^\tau \cdot (y^\tau)^{1-\tau + \tau^2} \le y^{7/9}< y/2,\] 
which implies that $e_1=d(r_1,r_2) = y - p > y/2.$ Similarly, we have that $e_2 > z/2$ and $e_3 > w/2$. Thus, by denoting $h=x^{\tau\alpha}$, we have
\begin{align*}
I_{P_{\tau}}(1, 4) &=\frac{x^{\tau\alpha} w^{(1-\tau)\alpha}}{d_{14}^\alpha}
< \frac{x^{\tau\alpha} w^{(1-\tau)\alpha}}{e_3^\alpha}\\
&\le 2^\alpha(x/w)^{\tau\alpha}=2^\alpha h^{1-1/\tau^3}.
\end{align*}
Similarly, using the estimates for $e_2$ and $e_1$, we obtain the bounds $I_{P_{\tau}}(2, 4) < 2^\alpha(y/w)^{\tau\alpha}=2^\alpha h^{1-1/\tau^2}$ and $I_{P_{\tau}}(3, 4) < 2^\alpha h^{1-1/\tau}$, respectively. Thus,
\[
I_{P_\tau}(S,4)< 2^\alpha h \cdot (h^{-1/\tau} + h^{-1/\tau^2} + h^{-1/\tau^3}).
\]
If $\tau\le 1/2$ and $x$ is large enough (depending on $\beta$) then the right hand side is less than $1/\beta$, as the terms of the sum in the parentheses decrease doubly-exponentially, and largest of them is at most $h^{-2}$.

Now let us bound $I_{P_\tau}(S,1)$. Note that $d_{21}=p$, $d_{31} = q - e_1$ and $d_{41}=r - e_1 - e_2$. Recall that $e_k =\Theta(l_{k+1})$ for $k=1,2,3$, which implies that the sequence $e_1, e_2, e_3$ grows doubly exponentially. Thus, if $x$ is large enough, we have that $e_2 > e_1$ and $e_3 > e_2 + e_1$ and $d_{31} = q - e_1 > q - y > q/2$ and $d_{41} > r - e_3 > r - z > r/2$. Thus, we have 
\[I_{P_{\tau}}(2, 1) = \frac{x^{(1-\tau)\alpha}y^{\tau\alpha}}{d_{21}^\alpha} =  \frac{x^{(1-\tau)\alpha}y^{\tau\alpha}}{p^\alpha}=x^{-\tau^2\alpha}\]
 and similarly, using the bounds above, 
 \[
 I_{P_{\tau}}(3, 1)=\frac{y^{(1-\tau)\alpha}z^{\tau\alpha}}{d_{31}^\alpha} < 2^\alpha y^{-\tau^2\alpha}
 \] and $I_{P_{\tau}}(4, 1) < 2^\alpha z^{-\tau^2\alpha}$. Combining these bounds, we see that $I_{P_\tau}(S,4)$ can be easily bounded by $1/\beta$ when $x$ is a large enough constant.
The values $I_{P_\tau}(S,2)$ and $I_{P_\tau}(S,3)$ can be bounded similarly, implying that the set $S=\{1,2,3,4\}$ is feasible.

Now let us consider the set $S'=\{5,6,7\}$. First, note that $d_2 = q - y > q/2$ and $d_3 = r - z > r/2$ when $x$ is large enough. In particular, $d_{57} > d_3 > r/2$ and $d_{67} > d_3 > r/2$. Thus, we have that 
\[I_{P_{\tau}}(5,7) < \frac{p^{\tau\alpha}r^{(1-\tau)\alpha}}{d_3^\alpha} < 2^\alpha (p/r)^{\tau\alpha}\]
 and 
 \[I_{P_{\tau}}(6,7)<\frac{q^{\tau\alpha}r^{(1-\tau)\alpha}}{d_3^\alpha} < 2^\alpha (q/r)^{\tau\alpha}.\]
  Thus, $I_{P_{\tau}}(S',7)$  can be bounded by $1/\beta$. Now consider link $5$. Note that $d_{65}= y$ and $d_{75}= z$. Then we have 
\begin{align*}
I_{P_{\tau}}(6,5)&< \frac{q^{\tau\alpha}p^{(1-\tau)\alpha}}{y^\alpha}\\
&=\frac{y^{(2-\tau + \tau^2)\tau\alpha}\cdot y^{(2-\tau + \tau^2)(1-\tau)\tau\alpha}}{y^\alpha}< y^{-\gamma\alpha},
\end{align*}
where $\gamma = 1-4\tau + 4\tau^2 - 3\tau^3 + \tau^4> 0$ when $\tau \le 2/5$.
We have similarly $a_{P_{\tau}}(7, 5) < z^{-\gamma\alpha}$. Thus, again, $I_{P_\tau}(S',5)$ is bounded  by a sum of doubly-exponentially decreasing terms, which can be bounded by $1/\beta$.
\end{proof}
 
Thus, we conclude that  the set of links $1,2,\dots,7$ can be scheduled in two slots. 
On the other hand, if we consider the (unique) minimum spanning tree of the set of given nodes, then the subtree corresponding to the intervals $e_1, e_2, e_3$ will take exactly as many slots as it has links, by the result of Sec. \ref{S:oblivious}, since this subtree corresponds to the instance of
doubly-exponential network constructed in Sec. \ref{S:oblivious} (with powers of $1/\tau$ in the exponent) and $\min\{\tau, 1-\tau\} = \tau$.

A symmetric argument will apply when $\tau \ge 1-2/5=3/5$. In this case, we reverse the directions of the links and use the following new definition of link lengths:
\begin{align*}
l_1=y &= x^{1/(1-\tau)},  &&l_4=p = y^{1-\tau}x^{\tau + (1-\tau)^2}\\
l_2=z &= y^{1/(1-\tau)},  &&l_5=q = z^{1-\tau}y^{\tau + (1-\tau)^2}\\
l_3=w & = z^{1/(1-\tau)}, &&l_6=r  = w^{1-\tau}z^{\tau + (1-\tau)^2}.
\end{align*}
 The construction and proof can clearly be extended for an arbitrary number of nodes on the line. Thus, we essentially proved the following:
\begin{proposition}
For each $\tau\in (0,2/5]\cup [3/5, 1)$, there is a family of linear networks such that the optimum solution of aggregation scheduling using $P_{\tau}$ takes a constant number of slots, but the MST cannot be scheduled in less than $\Theta(\log\log\Delta)=\Theta(n)$ slots using $P_{\tau}$.
\end{proposition}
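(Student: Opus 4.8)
The plan is to exhibit, for each admissible $\tau$, a linear pointset that carries two competing spanning structures: the MST, which contains (up to scaling) a copy of the doubly-exponential hard instance of Sec.~\ref{S:oblivious}, and an alternative spanning tree $T$ built from a few ``shortcut'' links that leap over the short intervals of the MST. The MST side then inherits the $\Omega(1/\log\log\Delta)$ barrier for free from the Proposition of Sec.~\ref{S:oblivious} --- using that $\min\{\tau,1-\tau\}=\tau$ when $\tau\le 2/5$, which is exactly what that construction requires --- so all the work lies in showing $T$ admits an $O(1)$-length coloring schedule.

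Concretely, for $\tau\le 2/5$ I would place the nodes so that the MST edges split into two groups: a chain of ``short'' links whose successive lengths satisfy $l_{j+1}=l_j^{1/\tau}$ (hence sit on a doubly-exponentially growing grid), and the intervals $e_1,e_2,\dots$ between consecutive short links, which are scaled exactly as the instance of Sec.~\ref{S:oblivious} with base $1/\tau$. The alternative tree $T$ keeps the short links but replaces the $e$-subtree by longer links $5,6,7,\dots$ obeying the pattern displayed before Fig.~\ref{fig:mstsubopt} (e.g.\ $l_5=l_2^{\tau}l_1^{1-\tau+\tau^2}$), extended periodically for larger instances. One must also check that $T$ is indeed a spanning tree that becomes strongly connected after adding one link, so that it is a legitimate aggregation tree.

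The heart of the argument is the Claim that the two sets $S=\{1,2,3,4,\dots\}$ (short links) and $S'=\{5,6,7,\dots\}$ (shortcuts) are each $P_\tau$-feasible, which yields a $2$-slot schedule. Using $I_{P_\tau}(j,i)= l_j^{\tau\alpha} l_i^{(1-\tau)\alpha}/d_{ji}^\alpha$ and the criterion $I_{P_\tau}(S,i)\le 1/\beta$, I would first record the geometric estimates $e_m=\Theta(l_{m+1})$ and deduce the ``almost exact'' separations $d_{ab}>\tfrac12\max\{l_a,l_b\}$ for the relevant pairs --- these hold because $p\le y^{7/9}<y/2$ and its analogues are valid once $x$ is a large constant and $\tau\le 1/2$. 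Substituting these, every interference sum $I_{P_\tau}(S,i)$, $I_{P_\tau}(S',i)$ collapses to a sum whose terms decay \emph{doubly} exponentially in the chain-distance from the interfering link to $i$; for instance the governing cross term for $S'$ is at most $2^\alpha y^{-\gamma\alpha}$ with $\gamma=1-4\tau+4\tau^2-3\tau^3+\tau^4$, which is positive precisely for $\tau\le 2/5$. Hence each sum is a fixed constant times its largest term and is driven below $1/\beta$ by taking $x$ large, proving the Claim.

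For $\tau\ge 3/5$ the same construction applies after reversing all link directions and substituting $\tau\mapsto 1-\tau$ in the length formulas (so $l_{j+1}=l_j^{1/(1-\tau)}$, etc.), because reversing a link swaps the roles of sender and receiver inside $I_{P_\tau}$; the positivity condition then reads $1-\tau\le 2/5$, and the gap $(2/5,3/5)$ is left out exactly because neither exponent polynomial is positive there. Since the MST subtree on the $e$-intervals is the Sec.~\ref{S:oblivious} instance with $\Theta(n)$ links and with $\Delta$ doubly exponential in $n$, it needs $\Theta(n)=\Theta(\log\log\Delta)$ slots, which establishes the separation against the constant-slot schedule of $T$. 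I expect the main obstacle to be the bookkeeping inside the feasibility Claim: carrying the doubly-exponential growth estimates through a generic arbitrarily long chain (rather than the displayed eight-node case) and confirming that the exponent polynomials retain the required sign across the whole stated range of $\tau$, boundary values included.
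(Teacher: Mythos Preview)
Your proposal is correct and follows essentially the same approach as the paper: the same alternating two-slot spanning tree built from the ``short'' links $1,2,3,\dots$ (with $l_{j+1}=l_j^{1/\tau}$) and the ``shortcut'' links $5,6,7,\dots$ (with $l_5=l_2^{\tau}l_1^{1-\tau+\tau^2}$, etc.), the same feasibility Claim for the two color classes driven by doubly-exponentially decaying interference sums and the exponent polynomial $\gamma=1-4\tau+4\tau^2-3\tau^3+\tau^4$, the same direction-reversal with $\tau\mapsto 1-\tau$ for the range $[3/5,1)$, and the same appeal to Sec.~\ref{S:oblivious} for the MST lower bound via the $e$-interval subtree. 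Your identification of the bookkeeping for the general $n$-node chain as the main remaining obstacle is accurate; the paper likewise presents only the eight-node case in detail and asserts the extension is straightforward.
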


We were not able to prove similar results for the case of global power control. The main difficulty stems from the fact that the  lower bound constructions we have for scheduling MST are very complicated, which makes it  difficult to ``rewire'' them, in order to obtain more efficiently schedulable spanning trees, if such exist.

\section{Conclusion}

We addressed the problem of estimating best aggregation capacity  in sensor networks. In our setting, it essentially boils down to finding minimally schedulable spanning tree of a set of wireless nodes. For this purpose, we consider an MST and show that with appropriate power control, the logarithmic bounds achieved in the previous work can be replaced with $\log^*$- or double-logarithmic bounds. This demonstrates once again the importance of right power control. We also obtained matching lower bounds.

We concentrated on the rate maximization aspect of aggregation. It would be interesting to see what tradeoffs can be achieved between rate maximization and delay minimization. Our results hold for perfectly compressible functions, but as mentioned in Sec.~\ref{S:aggregationprotocol}, it is possible to use our techniques for other functions. We leave more concrete results on this direction to the future work.

\bibliographystyle{abbrv}
\bibliography{Bibliography}

\appendix

\section{Conflict Graphs}
\label{S:conflict}

Given a set $L$ of links, one can define a binary notion of conflict between pairs of links. This naturally leads to a graph representation, i.e. a conflict graph. We will be interested in the following family of conflict graphs, introduced in~\cite{us:stoc15}. We consider only simple graphs.

Let $f:[1,\infty)\rightarrow \mathbb{R}_+$ be a positive non-decreasing sub-linear function.
Two links $i,j$ are said to be \emph{$f$-independent} if
  \[ \frac{d(i,j)}{l_{min}} > f\left(\frac{l_{max}}{l_{min}}\right), \]
where $l_{min}=\min\{l_i,l_j\},l_{max}=\max\{l_i,l_j\}$, and otherwise they are \emph{$f$-conflicting}.
A set of links is $f$-independent if they are pairwise $f$-independent.
The conflict graph $\cG_f(L)$\label{G:gf}, for a set $L$ of links, is the graph with vertex set $L$, where two vertices $i,j\in L$ are adjacent if and only if they are $f$-conflicting.

The usefulness of these conflict graphs stems from the fact that they give a good approximation for feasibility. Moreover they can be used to compute efficient schedules, as shown in~\cite{us:stoc15, us:fsttcs15}.

We will use only the following special cases. Below, $\gamma >0$ is a positive real parameter.
\begin{itemize}
\item{$f(x)\equiv \gamma$. Let $\cG_{\gamma}(L)$ denote this conflict graph.}
\item{$f(x)= \gamma x^\delta$ with $\delta \in (0,1)$. Let $\cG_{\gamma}^\delta(L)$ denote the conflict graph.}
\item{$f(x) = \gamma \cdot\max\{1, \log^{2/(\alpha - 2)}x\}$. Let $\cG_{\gamma\log}(L)$ denote the conflict graph.}
\end{itemize}

Below, we list some properties of the graphs above that we will use. The proofs can be found in~\cite{us:stoc15, us:fsttcs15}. We let $\chi(G)$\label{G:chi} denote the \emph{chromatic number} of a graph $G$, the minimum number of colors needed for coloring the vertices of $G$ such that adjacent vertices get different colors.
The following holds for any set $L$ of links.
\begin{enumerate}[a.]
\item{If constant $\gamma_1$ is large enough then each independent set in $\cG_{\gamma_1\tlog}(L)$ is feasible \cite[Cor. 1]{us:stoc15}.
There is an oblivious power scheme $P_{\tau}$, such that for an appropriate constant $\delta\in (0,1)$ and large enough constant $\gamma_2$, any independent set in $\cG_{\gamma_2}^\delta (L)$ is $P_{\tau}$-feasible \cite[Cor. 6]{us:fsttcs15}.}\label{I:upperb}
\item{For any constants $\gamma, \gamma' > 0$ and $\delta\in (0,1)$, 
\begin{align*}
&\chi(\cG_{\gamma}^\delta(L))= \chi(\cG_{\gamma'}(L))\cdot O(\log\log\Delta)\mbox{ and}\\
&\chi(\cG_{\gamma\tlog}(L))= \chi(\cG_{\gamma'}(L))\cdot O(\log^*\Delta),
\end{align*}
where $\Delta=\Delta(L)$ \cite[Thm. 1]{us:stoc15}.
}\label{I:gap}
\item{There is a constant factor approximation algorithm for vertex coloring in graphs $\cG_f$ with sub-linear function $f$. This holds, in particular, for graphs $\cG_{\gamma}$, $\cG_{\gamma}^\delta$ and $\cG_{\gamma\tlog}$.}\label{I:algo}
\end{enumerate}

The first property shows that coloring the graphs $\cG_{\gamma_1\log}$ and $\cG_{\gamma_2}^\delta$ gives coloring schedules for the set $L$. The second property property gives upper bounds for the number of slots needed.
The third property shows that the described upper bounds can be efficiently computed. This is achieved by proving the following property of $\cG_f$. Consider any link $i\in L$. Let $N^+_i$ denote the set of links that are not shorter than link $i$ and conflict with $i$. As shown in~\cite{us:stoc15}, the  cardinality of \emph{any} independent subset of $N_i^+$ is bounded by a constant. In other words, $\cG_f$ has constant \emph{inductive independence}, as defined in~\cite{yeborodin}. It is easy to check then that the following greedy algorithm is a constant factor approximation algorithm for coloring $\cG_f(L)$ (see e.g.~\cite{yeborodin} for details): consider the links in a non-increasing order by length, and assign each link the first available color, i.e. the one that has not been used by its neighbors yet.

Thus, to conclude this section, we restore the notation for Sec.~\ref{S:aggregationprotocol}: we set $G_1(L)=\cG_\gamma(L)$ for a constant $\gamma$ (whose choice affects only the constant factors in the approximation ratio), $G_{arb}(L)=\cG_{\gamma_1\log}(L)$ and $G_{obl}=\cG_{\gamma_2}^\delta$, for appropriate parameters of $\gamma_1, \gamma_2, \delta$.

\end{document}